\def\BibTeX{{\rm B\kern-.05em{\sc i\kern-.025em b}\kern-.08em
    T\kern-.1667em\lower.7ex\hbox{E}\kern-.125emX}}
\newcommand{\ba}{\begin{array}}
\newcommand{\ea}{\end{array}}
\newcommand{\be}{\begin{displaymath}}
\newcommand{\ee}{\end{displaymath}}
\newcommand{\ben}{\begin{equation}}
\newcommand{\een}{\end{equation}}
\newcommand{\bena}{\begin{eqnarray}}
\newcommand{\eena}{\end{eqnarray}}
\newcommand{\beqa}{\begin{eqnarray*}}
\newcommand{\enqa}{\end{eqnarray*}}
\newcommand{\f}{\frac}
\newcommand{\bc}{\begin{center}}
\newcommand{\ec}{\end{center}}
\newcommand{\bi}{\begin{itemize}}
\newcommand{\ei}{\end{itemize}}
\newcommand{\benu}{\begin{enumerate}}
\newcommand{\eenu}{\end{enumerate}}
\newcommand{\bdes}{\begin{description}}
\newcommand{\edes}{\end{description}}
\newcommand{\bt}{\begin{tabular}}
\newcommand{\et}{\end{tabular}}
\newcommand \xibf{\mbox{\boldmath$\xi$\unboldmath}}
\newcommand \Sigmabf{\hbox{$\bf \Sigma$}}
\newcommand \abf{{\bf a}}
\newcommand \bbf{{\bf b}}
\newcommand \ebf{{\bf e}}
\newcommand \hbf{{\bf h}}
\newcommand \nbf{{\bf n}}
\newcommand \vbf{{\bf v}}
\newcommand \wbf{{\bf w}}
\newcommand \xbf{{\bf x}}
\newcommand \ybf{{\bf y}}
\newcommand \Abf{{\bf A}}
\newcommand \Ibf{{\bf I}}
\newcommand \Pbf{{\bf P}}
\newcommand \Wbf{{\bf W}}
\newcommand{\circlambda}{\mbox{$\Lambda$
             \kern-.85em\raise1.5ex
             \hbox{$\scriptstyle{\circ}$}}\,}
\newcommand{\mypara}[1]{{\smallskip \noindent \bf #1}\hspace{0.1in}}
\newcommand{\ssf}[1]{\textrm{$\sf{#1}$}{}}
\newcommand{\alg}{\texttt{FLORAS} }
\newcommand{\algg}{\texttt{FLORAS}}
\newtheorem{theorem}{Theorem}
\newtheorem{lemma}{Lemma}
\newtheorem{assumption}{Assumption}
\newtheorem{defi}{Definition}
\newtheorem{remark}{Remark}
\DeclareMathOperator*{\argmin}{arg\,min}
\newcommand{\vect}[1]{\mathbf{#1}}
\newcommand{\avgvect}[1]{\mathbf{\overline{#1}}}
\newcommand{\expt}{\mathbb{E}}
\newcommand{\prob}{\mathbb{P}}
\newcommand{\norm}[1]{\left \| #1 \right \|}
\newcommand{\squab}[1]{\left [ #1 \right ]}
\newcommand{\dotp}[2]{\left \langle #1, #2 \right \rangle}
\newcommand{\congc}[1]{{\color{magenta}(Cong: #1)}}
\newcommand{\congc}[1]{}
\newcommand{\zixiang}[1]{{\color{blue}(Zixiang: #1)}}
\newcommand{\zixiang}[1]{}
\newcommand{\zixiangb}[1]{{\color{blue}#1}}
\newcommand{\zixiangb}[1]{#1}
\newcommand{\zixiangbb}[1]{{\color{blue}#1}}
\newcommand{\zixiangbb}[1]{#1}
\newcommand{\zixiangbbb}[1]{{\color{blue}#1}}
\newcommand{\zixiangbbb}[1]{#1}
\newcommand{\zixiangr}[1]{{\color{red}#1}}
\newcommand{\zixiangr}[1]{#1}
\newcommand{\dprev}[1]{{\color{blue}#1}}
\newcommand{\dprev}[1]{#1}
\begin{document}



\title{Differentially Private Wireless Federated Learning Using Orthogonal Sequences}

\author{Xizixiang~Wei \qquad Tianhao~Wang \qquad Ruiquan~Huang \qquad Cong~Shen \qquad Jing~Yang \qquad H.~Vincent~Poor
\thanks{A preliminary version of this work was presented at the 2023 IEEE International Conference on Communications \cite{wei2023icc}.}

\thanks{Xizixiang Wei and Cong Shen are with the Charles L. Brown Department of Electrical and Computer Engineering, University of Virginia, USA. (E-mail: \texttt{\{xw8cw,cong\}@virginia.edu}).

Tianhao Wang is with the Department of Computer Science, University of Virginia, USA. (E-mail: \texttt{tianhao@virginia.edu}).

Ruiquan Huang and Jing Yang are with The Department of Electrical Engineering, The Pennsylvania State University, USA. (E-mail: \texttt{\{rzh5514,yangjing\}@psu.edu}).

H. Vincent Poor is with the Department of Electrical and Computer Engineering, Princeton University, USA. (E-mail: \texttt{poor@princeton.edu}).
}}

\maketitle

\begin{abstract}
We propose a privacy-preserving uplink over-the-air computation (AirComp) method, termed \texttt{FLORAS}, for single-input single-output (SISO) wireless federated learning (FL) systems. From the perspective of communication designs, \texttt{FLORAS} eliminates the requirement of channel state information at the transmitters (CSIT) by leveraging the properties of orthogonal sequences. From the privacy perspective, we prove that \texttt{FLORAS} offers both \emph{item-level} and \emph{client-level} differential privacy (DP) guarantees. Moreover, by properly adjusting the system parameters, \texttt{FLORAS} can flexibly achieve different DP levels at no additional cost. A new FL convergence bound is derived which, combined with the privacy guarantees, allows for a smooth tradeoff between the achieved convergence rate and differential privacy levels. Experimental results demonstrate the advantages of \texttt{FLORAS} compared with the baseline AirComp method, and validate that the analytical results can guide the design of privacy-preserving FL with different tradeoff requirements on the model convergence and privacy levels.

\end{abstract}

\begin{IEEEkeywords}
Federated Learning; Differential Privacy; Orthogonal Sequences; Convergence Analysis.
\end{IEEEkeywords}

\section{Introduction}
\label{sec:intro}
Real-world data generated or collected by edge devices enables various machine learning (ML) applications. For certain privacy-sensitive tasks, users prefer to keep their data locally instead of uploading to cloud servers. Federated learning (FL) \cite{mcmahan2017fl,konecny2016fl} has emerged as a distributed learning paradigm that caters to this growing trend, and is able to train a global ML model across all local datasets without the server having direct access to client data. 
 
The local training nature of FL leads to massive communication costs, as an FL task consists of multiple learning rounds, each of which requires uplink and downlink model exchange between clients and the server. Compared with downlink broadcasting, uplink communication is more challenging in FL when communication is over the wireless medium \cite{wei2023twc,Zheng2020jsac,mu2022isit}. Due to the stringent power constraints at mobile devices, channel noise and fading have a more conspicuous impact on uplink communications. More importantly, significant scalability challenges arise from the large number of clients in FL versus limited uplink communication resources. Uplink communication is known to be one of the key bottlenecks of wireless federated learning \cite{mcmahan2017fl}.

To tackle the scalability problem, over-the-air computation (also known as \emph{AirComp}) mechanisms have been proposed. 
Instead of decoding individual local models of clients and then aggregating, AirComp allows multiple clients to transmit uplink signals in a naturally superpositioned fashion over a wireless medium, and decodes the average global model directly at the FL server. AirComp dramatically improves the scalability of wireless FL, and reduces the signal processing latency. Therefore, AirComp is regarded as a key technology for FL in wireless networks and has been investigated extensively \cite{zhu2019broadband,yang2020federated,amiri2020federated,cao2020tpc}. 
 
The most popular AirComp method is based on channel inversion power control \cite{zhu2019broadband}, which ``inverts'' the fading channel at each transmitter so that the aggregated model can be directly obtained at the server. Variants and enhancements of AirComp have been studied, and a literature review can be found in Section~\ref{sec:related-A}. Yet, a fundamental limitation of the existing methods is that they mostly require channel state information at the transmitter (CSIT). Enabling CSIT in wireless communication systems is complicated and is substantially harder than obtaining the channel state information at the receiver (CSIR).  Moreover, channel inversion based on CSIT is well known to ``blow up'' when one of the users' channels is in deep fade \cite{TV:05}. Hence, exploring CSIT-free AirComp methods becomes attractive\cite{zhu2021over}. 

Meanwhile, amidst a growing focus on data security, the importance of safeguarding individuals' personal information has become increasingly emphasized. Although FL intuitively helps protect client privacy by keeping training data locally and never sharing it with the server, private information can still be leaked to some extent by analyzing the ML model parameters trained and uploaded by the clients \cite{shokri2015privacy,wang2019beyond,ma2020safeguarding}. To address the privacy concern, a natural way is to add (artificial) noise to ML model parameters in the upload phase of FL, whose privacy properties can be mathematically characterized using differential privacy (DP) \cite{dwork2014algorithmic}. 

AirComp has the potential to provide DP guarantee at no extra cost due to the inherent natural noise in the wireless channel. Heuristically, different DP levels can be guaranteed by controlling the signal-to-noise ratio (SNR), and thus the effective channel noise level, at the receiver side. Yet, most of the literature on AirComp rarely characterizes the achievable privacy in a mathematically rigorous fashion. 
A literature review can be found in Section \ref{sec:related-B}. Moreover, we note that the existing literature focuses only on item-level DP in wireless FL, while client-level DP (also known as user-level DP) \cite{levy2021learning} is a new metric that is particularly worth investigating for FL, which is largely missing.

To simultaneously remove the CSIT requirement of AirComp and address the privacy challenge, we propose \alg -- \underline{F}ederated \underline{L}earning using \underline{OR}thogon\underline{A}l \underline{S}equences, a novel uplink wireless physical layer design for FL by leveraging the properties of \emph{orthogonal sequences}. {On the communication design,} \alg preserves all the advantages of AirComp while removing the CSIT requirement. From the perspective of privacy, \alg achieves desired DP guarantees (both item-level and client-level) by adjusting the number of used orthogonal sequences, making it much simpler and providing more flexibility to trade off privacy and utility. 

The main contributions of this paper are summarized as follows:
\begin{itemize}[leftmargin=*] \itemsep=0pt
    \item We propose \alg for uplink communications in single-input single-output (SISO) wireless FL systems. \alg enjoys all the advantages of AirComp, yet without the CSIT requirement. In particular, orthogonal sequences allow the base station (BS) to obtain  individual CSIR via a single pilot, by which global ML model parameters can be estimated via simple linear projections. Therefore, \alg significantly reduces the channel estimation overhead. Different from the channel inversion power control, \alg allows the transmit power to be independent of the channel realizations, which avoids increasing the dynamic range of the transmit signal and improves the power efficiency.
    \item By adjusting the number of orthogonal sequences in the system configuration, the novel signal processing technique in \alg produces Cauchy effective noise to the decoded global model, which empowers flexible item-level and client-level DP guarantees. Moreover, a new FL convergence bound based on the truncated Cauchy noise is derived, which allows us to characterize the tradeoff between the model convergence rate and the achievable DP levels. 
    \item We conduct extensive experiments based on real-world datasets to evaluate the performance of \algg. Numerical results demonstrate the performance advantages of \alg compared with the channel inversion method and validate our theoretical analysis by achieving tradeoffs between model convergence and DP. 
\end{itemize}

The remainder of this paper is organized as follows. Literature review is presented in Section \ref{sec:related}. Section \ref{sec:model} introduces the FL pipeline and the uplink communication model. The proposed \alg design is detailed in Section \ref{sec:propMtd}. The DP guarantee and convergence analysis of \alg are presented in Section \ref{sec:privacy} and Section \ref{sec:CvgAna}, respectively. Experimental results are reported in Section \ref{sec:exp}, followed by the conclusion of our work in Section \ref{sec:concl}.

\section{Related Work}\label{sec:related}

\subsection{AirComp for FL}\label{sec:related-A}

The AirComp approach \cite{zhu2019broadband,yang2020federated,amiri2020federated,cao2020tpc} exploits the inherent signal superposition characteristics of a wireless multiple access channel to efficiently perform sum/average computations. This methodology can be considered as a special instance of computation over multiple access channels, as outlined by \cite{nazer2007it}. The approach has garnered significant attention due to its ability to minimize uplink communication costs, irrespective of the number of participating clients in FL. The exploration of client scheduling, along with various power and computation resource allocation techniques, has been investigated by \cite{chen2020convergencenew,xu2020client,sun2021dynamic,lee2021adaptive,wadu2021joint}. Several studies have provided convergence guarantees for AirComp under diverse practical constraints and types of heterogeneity  \cite{lin2021deploying,aygun2022over,wan2021convergence,sery2021over,sun2022time}. Efforts also have been made to reduce the CSIT requirement of AirComp. \cite{sery2020tsp} relaxes full CSIT by utilizing only the phase information of the channel. Notably, \cite{amiri2021blind} and \cite{wei2023twc} present CSIT-free AirComp methods that leverage channel orthogonality, although their effectiveness is limited to massive MIMO systems.

\subsection{Differential Privacy for FL} \label{sec:related-B}
\zixiangb{DP-SGD has been widely regarded as a standard approach to train a differentially private ML model \cite{abadi2016deep}. Along with its variants\cite{du2021dynamic,van2022generalizing}, recent years have also witnessed increased efforts on DP for distributed learning systems, including the clipping technique in \cite{zhang2022understanding,andrew2021differentially}, the subsampling principle in \cite{balle2018privacy}, and random quantization in \cite{agarwal2018cpsgd}.} In the category of exploiting the channel noise for differentially private FL, 
\cite{wei2020federated} proposes an AirComp design to achieve DP by adjusting the effective noise; \cite{seif2020wireless,liu2020privacy} investigate adding noise and power allocation in non-orthogonal multiple access (NOMA) systems; \cite{mohamed2021privacy} considers DP amplification via user sampling and wireless aggregation; \cite{hu2020personalized} applies it to personalized FL. \cite{wei2021low} jointly optimizes the latency and DP requirements of FL, and \cite{kim2021federated} discusses the tradeoff among privacy, utility, and communication.


\section{System Model}
\label{sec:model}
\subsection{FL Model}
Consider an FL task with a parameter server and $M$ clients. Each client $k \in [M]$ stores a (disjoint) local dataset $\mathcal{D}_k$, with its size denoted by $D_k$. The amount of the total data is $\zixiangr{D_{\text{total}}} \triangleq \sum_{k\in [M]} D_k$. We use $f_k(\wbf)$ to denote the local loss function at client $k$, which measures how well an ML model with parameter $\wbf \in \mathbb{R}^d$ fits its local dataset. The global objective function over all $M$ clients can be expressed as
\begin{equation*}
    f(\wbf) = \sum_{k \in [M]} p_k f_k(\wbf),
\end{equation*}
where $p_k = \frac{D_k}{\zixiangr{D_{\text{total}}}}$ is the weight for each local loss function, and the goal of FL is to find the optimal model parameter $\wbf^*$ that minimizes the global loss function: 
\begin{equation*}
    \wbf^* \triangleq \argmin_{\wbf\in\mathbb{R}^d}f(\wbf).
\end{equation*} 
We define $\Gamma \triangleq f^* - \sum_{k \in [M]} p_k f_k^*$ to capture the degree of the non-independent and identically distribution (non-IID) of local datasets \cite{li2019convergence}, where $f^*$ and $f^*_k$ are the minima of global and local loss functions, respectively.

The $\textsc{FedAvg}$ framework \cite{mcmahan2017fl} keeps client data locally, and the global model is obtained at the parameter server by the composition of multiple learning rounds. One of the key characteristics of FL is \emph{partial clients participation}, i.e., only a portion of clients are selected in a single learning round for model upload. Here, we assume that $K$ of total $M$ clients are uniformly randomly selected during each learning round for the FL task. To simplify the notation, we use the subscript $k = 1,\cdots, K$ to indicate the participating $K$ clients during a given learning round, acknowledging that they could correspond to different clients in different rounds. 


A typical wireless FL pipeline iteratively executes the following steps in the $t$-th learning round.
\begin{enumerate}
\item \textbf{Downlink wireless communication.} The BS broadcasts the current global model $\wbf_t$ to all $K$ selected devices over the downlink wireless channel.
\item \textbf{Local computation.} Each client $k$ uses its local data to train a local ML model $\wbf_{t+1}^k$ improved upon the received global model $\wbf_t$. We assume that mini-batch stochastic gradient descent (SGD) is adopted to minimize the local loss function. The parameter is updated iteratively (for $E$ steps) at client $k$ as: 
\begin{align*}
&\wbf_{t,0}^k  = \wbf_t, \\
&\wbf_{t,\tau}^k  = \wbf_{t,\tau-1}^k - \eta_t \nabla\tilde{f}_k(\wbf_{t,\tau - 1}^k,\xi_{t,\tau - 1}^k), \forall \tau = 1, \cdots, E,\\
& \wbf_{t+1}^k  = \wbf_{t,E}^k,
\end{align*} 
where $\nabla\tilde{f}_k(\wbf, \xi)$ denotes the stochastic gradient at client $k$ on model $\wbf$ and mini-batch $\xi$.
\item \textbf{Uplink wireless communication.} Each involved client uploads its latest local model to the server synchronously over the uplink wireless channel.
\item \textbf{Server aggregation.} The BS aggregates the received noisy local models $\tilde \wbf_{t+1}^k$ to generate a new global model. For simplicity, we assume that each local dataset has an equal size, {i.e., $D_k = D, \forall k \in [M]$}; therefore we have $\wbf_{t+1} = \Sigma_{k=1}^K \frac{1}{K} \tilde \wbf_{t+1}^k$.
\end{enumerate}

This work focuses on steps $3$ and $4$ in the FL pipeline. In particular, we leverage the unique properties of orthogonal sequences, which leads to an efficient FL uplink communication design with DP guarantees.

\subsection{Communication Model}
\label{sec:commmodel}
In each learning round, since there are $K$ active clients, the uplink communication between clients (mobile devices) and the parameter server (base station) can be modeled as over a fading multiple access channel. Consider a cell with a single-antenna BS and $K$ single-antenna users involved in one round of the aforementioned FL task. The communication system leverages orthogonal sequences for uplink transmissions. Note that one of the most popular implementations of the orthogonal sequence-based system is code-division multiple access (CDMA). We assume a spreading sequence set $\mathcal{A}=\{ \abf_1, \cdots, \abf_k, \cdots, \abf_N \}$ containing $N$ unique spreading sequences ($N\geq K$), where each spreading sequence is denoted as $\abf_k = [a_{1,k},\cdots, a_{L,k} ]^T$ and $L$ is the length of the spreading sequence. Each user is (randomly) assigned with a unique spreading sequence $\abf_k$ from $\mathcal{A}$ as its signature. 

We assume that the BS only has knowledge of the entire spreading sequence set $\mathcal{A}$, \emph{without knowing the specific signature of each user}. We emphasize that this restriction is consistent with our goal of guaranteeing user privacy -- BS cannot identify users or decode individual models based on their spreading sequences. We will discuss the details of the spreading sequence assignment mechanism in Section \ref{subsec:AgnmntMchsm}.

In the uplink communication, each client transmits the differential  between the received global model and the computed new local model: 
$$\zixiangr{{\bf{x}}_t^k} = \wbf_t - \wbf_{t+1}^k\zixiangr{\in \mathbb{R}^{d}},\;\; \forall k = 1,\cdots,K.$$
The BS aims at estimating $\zixiangr{\xbf_t}\triangleq \sum_{k = 1}^K {\bf{x}}_t^k$. 

Before the transmission of \zixiangr{${\xbf}_t^k$}, client $k$ \zixiangr{will apply a normalization technique. We denote $${\mathsf{x}_t^k} \triangleq [x_{1,t}^k, \cdots, x_{i,t}^k,\cdots, x_{d,t}^k]^T \in\mathbb{R}^d$$ as the normalized transmit signal. The following normalization technique, adopted in \cite{zhu2019broadband}, ensures $\expt[\mathsf{x}_t^k]=0$ and {$\norm{{\mathsf{x}_t^k}}_2^2 \leq C^2$}:
\begin{equation*}
    \mathsf{x}_t^k \triangleq \frac{C ({\xbf}_t^k - \mu_k)}{{\zixiangr{C_{\max,t}}}},
\end{equation*}
where $\mu_k$ is the sample mean of $d$ elements in ${\xbf}_t^k$. Note that normalization parameters $\mu_k$, $C$ and $\zixiangr{C_{\max,t}} \triangleq \max\{\norm{{\xbf}_t^k - \mu_k}_2,\forall k\}$ will be determined by the BS and clients via a separate control channel as suggested in \cite{zhu2019broadband}.} The $l_2$-norm bound guaranteed in the normalization not only provides the sensitivity of $\mathsf{x}_t^k$ for the DP analysis in Section \ref{sec:privacy}, but also satisfies the practical requirement that each client has a limited transmit power. We note that the similar technique has also been applied in \cite{zhang2022understanding}.

To simplify the notation, we omit index $t$ and use $x_k^i$ instead of $x_{i,t}^k$ barring any confusion. We assume that each client transmits every element of the model differential $\{x_k^1,\cdots,x_k^d\}$ via $d$ shared time slots. 
In addition, \emph{block fading channel} is assumed\footnote{The large-scale pathloss and shadowing effect is assumed to be taken care of by, e.g., open loop power control \cite{SesiaLTE}, which is a common practice in real-world systems.}, i.e., the fading channel between each client and the BS $h_k$ remains unchanged in $d$ time slots. We emphasize that we do not make any specific assumption on the fading distribution throughout this paper. 
In the $i$-th slot, each client transmits symbol $x_{k}^i$ spread by its uniquely assigned orthogonal sequence $\abf_k$. The BS received signal can be written as
\begin{align*}
    \ybf_i = \sum_{k = 1}^K \abf_k  h_k x_k^i + \nbf_i, \;\; \forall i = 1,\cdots,d,
\end{align*}
where $\nbf_i$ is the additive white Gaussian noise (AWGN) with mean zero and variance $\sigma^2/L$ per dimension. Note that since the model differential parameters are real signals, we only need to consider the real parts of channel coefficients and noise. Although one-dimensional (real) modulation cannot fully leverage the channel degrees of freedom, it is consistent with the fact that binary phase-shift keying (BPSK) is the most common modulation scheme in CDMA systems \cite{TV:05}. In addition, focusing on the real dimension makes the subsequent discussion easier and highlights our contribution better. Also, as we detail later in Remark \ref{remark:channelgain}, we can leverage full channel gain at an affordable cost of partial CSIT. 

We note that spreading sequences are \emph{orthonormal}, i.e.,
\begin{equation}\label{eq:orthogonalSC}
    \abf_i^T\abf_i = 1,\;\;\forall i; \quad \text{and}\quad\abf_i^T\abf_j = 0, \;\;\forall i\neq j.
\end{equation}
At the BS, the receiver will decode the estimated aggregation parameter $\tilde{x}_i$, which is a noisy version of $x_i \triangleq \sum_{k=1}^K x_{k}^{i}$, \zixiangr{and recover ${ \mathsf{\tilde x}}_t \triangleq \squab{\tilde{x}_1, \cdots,\tilde{x}_d}^T$} in $d$ slots. After that, the BS can perform de-normalization:
\begin{equation*}
    \tilde \xbf_t \triangleq \frac{\zixiangr{C_{\max,t}}}{C}\zixiangr{\mathsf{\tilde x}_t} + \sum_{k = 1}^K \mu_k
\end{equation*}
and compute the new global model as
\begin{equation}\label{eq:diffGlobal}
    \wbf_{t + 1} = \wbf_t + \frac{1}{K}\tilde \xbf_t.
\end{equation}

Throughout the paper, we assume that all users are synchronized in frames, which can be achieved by the BS sending a beacon signal to initialize uplink transmissions \cite{SesiaLTE}.
\begin{figure*}
    \centering
    \includegraphics[width = 0.95\linewidth]{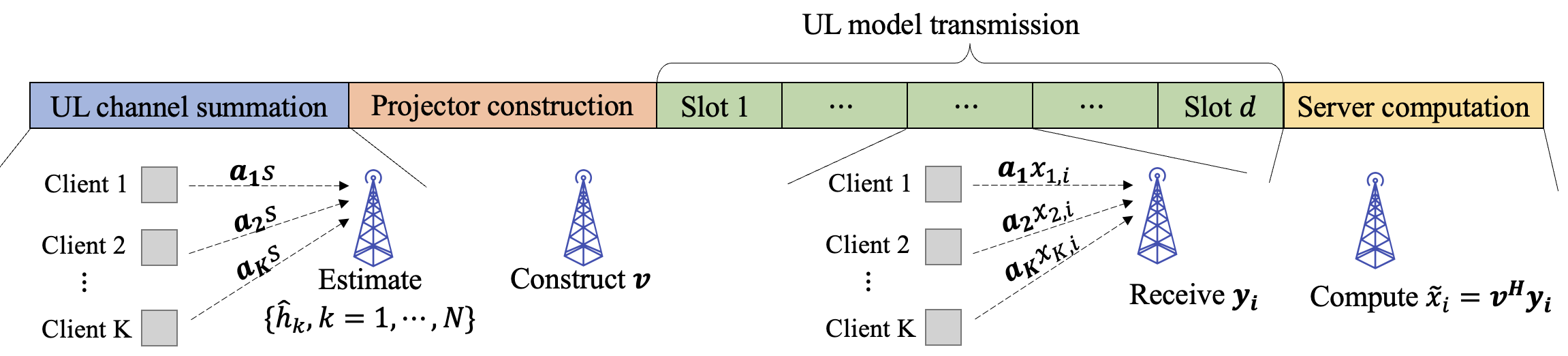}
    \caption{Illustration of the proposed uplink communication design of \algg.}
    \label{fig:PropMtd}
    \vspace{-0.1in}
\end{figure*}

\section{\algg}
\label{sec:propMtd}
We present the \alg design for uplink communications in wireless FL, and give some preliminary analysis.


\subsection{Algorithm Design}\label{subsec:alg}

\alg is a four-step protocol detailed as follows.

\mypara{Step 1: Uplink channel estimation.} The BS first schedules \emph{all} participating users to transmit a common pilot $s$ simultaneously. The received signal is
\begin{align*}
    \ybf_s = \sum_{k = 1}^K \abf_k  h_k  s + \nbf_s.
\end{align*}
The BS can utilize $\ybf_s$ and the {\it complete} set of orthogonal sequences $\mathcal{A}=\{ \abf_1, \cdots, \abf_N \}$ to estimate the channel gain coefficients. For the $K$ spreading sequences that are actually adopted by the user\footnote{Without loss of generality, we assume the first $K$ spreading sequences from the set are selected. This assumption is made to ease the notation.}, we have
\begin{equation*}
    \hat h_k  = \frac{\abf_k^T\left[\sum_{i = 1}^K \abf_i h_i s + \nbf_s\right]}{s}  =  h_k + \frac{\abf_k^T\nbf_s}{s}, \forall k = 1,\cdots,K.
\end{equation*}
For the $N-K$ spreading sequences that are not selected by any user, the BS obtains
\begin{equation*}
    \hat h_k = \frac{\abf_k^T\left[\sum_{i = 1}^K \abf_i h_i s + \nbf_s\right]}{s} = \frac{\abf_k^T\nbf_s}{s}, \forall k = K + 1,\cdots,N.
\end{equation*}
We emphasize that the BS is not able to distinguish these two cases; all it has are $N$ estimates $\{\hat h_1, \cdots, \hat h_N\}$.

\mypara{Step 2: Projector construction.} For simplicity, we assume $s = 1$ in the following discussion. After the channel estimation, the BS constructs the following vector based on \emph{all} of the estimated channel coefficients:
\begin{equation*}
    \vbf = \sum_{k = 1}^N \frac{1}{\hat h_k} \abf_k = \sum_{k = 1}^K\frac{\abf_k}{h_k + \abf_k^T\nbf_s} + \sum_{k = K + 1}^N \frac{\abf_k}{\abf_k^T\nbf_s}.
\end{equation*}
We note that since the BS does not know which $K$ of the total $N$ spreading sequences are adopted by the users, it has to use all of $\{\hat h_1,  \cdots, \hat h_N\}$ to construct the projector $\hbf_s$. This seemingly redundant design actually enables better privacy protection, which will be clear in Section \ref{sec:privacy}.

\mypara{Step 3: UL model transmission.} All users transmit every element of the differentials via $d$ shared time slots:
\begin{equation*}
    \ybf_i = \sum_{k = 1}^K \abf_k h_k x_k^i + \nbf_i, \;\;\forall i = 1,\cdots,d.
\end{equation*}

\mypara{Step 4: Sum model decoding.} The BS applies the following linear projection to estimate each aggregated model differential $x_i, \forall i = 1,\cdots,d$: 
\begin{align*}
    \tilde x_i & = \vbf^T\ybf_i = \sum_{k = 1}^N \frac{1}{\hat h_k} \abf_k  \squab{\sum_{k = 1}^K \abf_k h_k x_k^i + \nbf_i}\\
    & = \squab{\sum_{k = 1}^K\frac{\abf_k}{h_k + \abf_k^T\nbf_s} + \sum_{k = K + 1}^N \frac{\abf_k}{\abf_k^T\nbf_s}} \squab{\sum_{k = 1}^K \abf_k h_k x_k^i + \nbf_i}\\
    & =  \sum_{k = 1}^K \frac{h_k}{h_k + \abf_k^T\nbf_s}x_k^i + \sum_{k = 1}^K \frac{\abf_k^T\nbf_i}{ h_k + \abf_k^T\nbf_s}  + \sum_{k = K + 1}^N \frac{\abf_k^T\nbf_i}{\abf^T_k \nbf_s}.
\end{align*}
After obtaining $\{\tilde x_1, \cdots, \tilde x_d \}$, the BS can compute the new global model following \eqref{eq:diffGlobal} and start the next learning round. The above four-step procedure is illustrated in Fig.~\ref{fig:PropMtd}. 

\subsection{Preliminary Analysis}
In the high signal-to-noise ratio (SNR) regime, where the channel fading effect dominates the noise, we have $\expt[\norm{\abf_k^T\nbf_s}^2]\ll  \expt[\norm{h_k}^2]$. Therefore, we can establish the following approximation for the estimated model in Step $4$:
\begin{equation}\label{eq:approx}
    \tilde x_i \approx   \sum_{k = 1}^K x_k^i  + \sum_{k = K + 1}^N \frac{\abf_k^T\nbf_i}{\abf^T_k \nbf_s}, \;\;\forall i = 1,\cdots,d,
\end{equation}
where $\sum_{k = K + 1}^N \frac{\abf_k^T\nbf_i}{\abf^T_k \nbf_s}$ denotes the dominant noise of the received global model parameters\footnote{Note that this approximation drops the minor noise term, which results in that the DP guarantee in the later discussion is a lower bound of the true DP level of \algg, i.e., we achieve better DP than that computed in this paper.}. The distribution of this post-processing noise is not straightforward, and we next present Lemma \ref{lemma:cauchyRndVar} to establish that the noise term is a \emph{Cauchy random variable}.

\begin{lemma}\label{lemma:cauchyRndVar}
\zixiangbbb{Define $\gamma\triangleq N - K$}. For IID Gaussian random vector $\nbf_i,\nbf_s \sim \mathcal{N}(0, \f{\sigma^2}{L}\Ibf)$, random variable $$X\triangleq \sum_{k = K + 1}^N \frac{\abf_k^T\nbf_i}{\abf^T_k \nbf_s}\sim \ssf{Cauchy}(0,\zixiangbbb{\gamma}),\;\;\forall\abf_k\in \mathcal{A},$$
with probability density function (PDF)
\begin{equation*}
f_X(x)= \frac{1}{\pi}\frac{\zixiangbbb{\gamma}}{x^2 + \zixiangbbb{\gamma}^2},\;\;x\in \mathbb{R}.
\end{equation*}
\end{lemma}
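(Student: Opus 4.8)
The plan is to use the orthonormality of the spreading sequences to rewrite $X$ as a sum of $N-K$ \emph{independent} standard Cauchy random variables, and then to conclude via the stability of the Cauchy family under addition. So the argument has three ingredients: (i) a Gaussian‑projection lemma turning the correlated‑looking summands into independent ones, (ii) the classical fact that a ratio of two independent centered Gaussians is Cauchy, and (iii) a characteristic‑function computation for the sum.

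For (i), I would set $u_k \triangleq \abf_k^T\nbf_i$ and $v_k \triangleq \abf_k^T\nbf_s$ for $k=K+1,\dots,N$. Since $\nbf_i\sim\mathcal{N}(0,\tfrac{\sigma^2}{L}\Ibf)$, the vector $(u_{K+1},\dots,u_N)$ is jointly Gaussian, zero‑mean, with covariance $\expt[u_j u_k]=\abf_j^T\expt[\nbf_i\nbf_i^T]\abf_k=\tfrac{\sigma^2}{L}\,\abf_j^T\abf_k=\tfrac{\sigma^2}{L}\delta_{jk}$ by \eqref{eq:orthogonalSC}; hence the $u_k$ are IID $\mathcal{N}(0,\tfrac{\sigma^2}{L})$, and likewise the $v_k$ are IID $\mathcal{N}(0,\tfrac{\sigma^2}{L})$. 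Because $\nbf_i$ and $\nbf_s$ are independent, $\{u_k\}$ is independent of $\{v_k\}$, so the $N-K$ pairs $(u_k,v_k)$ are mutually independent across $k$, and therefore the ratios $Z_k\triangleq u_k/v_k$ are mutually independent. This decoupling is the one step where the hypotheses really bite, and I expect it to be the main (if modest) obstacle: the summands all share the same noise vectors $\nbf_i,\nbf_s$, so their independence is not visually obvious and rests entirely on orthonormality making the projections onto distinct $\abf_k$ uncorrelated, hence — being jointly Gaussian — independent.

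For (ii), each $Z_k$ is the quotient of two independent zero‑mean Gaussians of equal variance; the common factor $\sigma^2/L$ cancels, so $Z_k$ is distributed as the ratio of two independent standard normals, which is $\ssf{Cauchy}(0,1)$. I would justify this in one line either by a change of variables in $(u_k,v_k)$, or by writing $(u_k,v_k)$ in polar form so that $Z_k=\tan\Theta$ with $\Theta$ uniform on $(-\pi,\pi]$. For (iii), since the $Z_k$ are independent with characteristic function $\expt[e^{\mathrm{j}tZ_k}]=e^{-|t|}$, we get $\expt[e^{\mathrm{j}tX}]=\prod_{k=K+1}^{N}e^{-|t|}=e^{-(N-K)|t|}$, which is exactly the characteristic function of $\ssf{Cauchy}(0,N-K)$; uniqueness of characteristic functions then gives $X\sim\ssf{Cauchy}(0,N-K)$ with density $f_X(x)=\tfrac{1}{\pi}\tfrac{N-K}{x^2+(N-K)^2}$. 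Equivalently, one may simply cite the scaling/stability property that a sum of $n$ IID $\ssf{Cauchy}(0,\gamma)$ variables is $\ssf{Cauchy}(0,n\gamma)$. Note the final answer depends on $\abf_k$ only through $\|\abf_k\|_2=1$ and orthogonality, so the statement indeed holds for every admissible orthonormal set, as claimed.
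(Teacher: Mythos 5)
Your proof is correct, but it takes a genuinely different route from the paper's. You exploit orthonormality to show that the projections $\abf_k^T\nbf_i$ and $\abf_k^T\nbf_s$, $k=K+1,\dots,N$, form a jointly Gaussian vector with diagonal covariance, hence are mutually independent; each ratio is then a standard Cauchy, and the characteristic-function (stability) argument gives $\ssf{Cauchy}(0,N-K)$. The paper instead writes $Y=(\abf_{K+1}^T\nbf_i,\dots,\abf_N^T\nbf_i)$ and $Z=(\abf_{K+1}^T\nbf_s,\dots,\abf_N^T\nbf_s)$, allows an \emph{arbitrary} common covariance $\Sigmabf$, and invokes the result of Pillai and Meng (\cite{pillai2016unexpected}) that any convex combination $\sum_k w_k Y_k/Z_k$ with weights summing to one is exactly $\ssf{Cauchy}(0,1)$, even when the ratios are dependent; scaling by $N-K$ then gives the claim. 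Your argument is more elementary and self-contained (no appeal to the Pillai--Meng theorem, only the classical normal-ratio fact and Cauchy stability), and it is perfectly adequate for the lemma as stated, since the sequences in $\mathcal{A}$ are orthonormal. What the paper's route buys is generality: because it never uses independence, the same proof covers non-orthogonal (e.g.\ random Gaussian, NOMA-type) spreading sequences with arbitrary correlation structure, which the paper explicitly relies on in its later remark extending Lemma \ref{lemma:cauchyRndVar} to NOMA systems; your independence-based argument would break down there, since distinct projections are no longer uncorrelated.
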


\begin{proof}
We first note that $\abf^T_k\nbf_i$ and $\abf^T_k\nbf_s$ are Gaussian random variables since they are linear combinations of IID Gaussian random variables. Let $$ Y \triangleq \squab{\abf_{K + 1}^T\nbf_i, \abf_{K + 2}^T\nbf_i,\cdots,\abf_{N}^T\nbf_i}^T$$ and $$Z \triangleq \squab{\abf_{K + 1}^T\nbf_s, \abf_{K + 2}^T\nbf_s,\cdots,\abf_{N}^T\nbf_s}^T,$$
and it is straightforward to verify that $Y$ and $Z$ are IID Gaussian random vectors with distribution $\mathcal{N}(0, \Sigmabf)$, where $\Sigmabf$ is a covariance matrix. According to \cite{pillai2016unexpected}, we have
\begin{equation*}
    \sum_{k = K + 1}^N w_k\frac{Y_k}{Z_k} = \sum_{k = K + 1}^N b_k\frac{\abf_k^T\nbf_i}{\abf^T_k \nbf_s}\sim \ssf{Cauchy}(0,1),
\end{equation*}
as long as $w_k$ is independent of $(Y,Z)$ and $\sum_{k = K + 1}^N w_k= 1$. Letting $b_k = \frac{1}{\zixiangbbb{\gamma}}$ and using the fact that $kX\sim\ssf{Cauchy}(0,|k|)$ if $X\sim\ssf{Cauchy}(0,1)$, we prove Lemma \ref{lemma:cauchyRndVar}.
\end{proof}

Cauchy distribution is known as a ``fat tail" distribution, as the tail of its PDF decreases proportionally with $1/x^2$. Lemma \ref{lemma:cauchyRndVar} suggests that, for a fixed $K$, a larger spreading sequence set will result in a heavier tail in the Cauchy noise. Therefore, we can adjust the size of the spreading sequence set to induce different additive Cauchy noise in \eqref{eq:approx}. We will discuss the effect of Cauchy noise on DP and convergence in Sections \ref{sec:privacy} and \ref{sec:CvgAna}, respectively. 

A few remarks about \alg are now in order.

\begin{remark}[Advantages over the channel inversion method]
Compared with the widely studied channel inversion-based AirComp design, \alg does not require CSIT for uplink communications, which greatly reduces the communication overhead. This is especially attractive for Internet-of-Things (IoT) applications with massive devices. Moreover, in SISO systems, the maximum uplink transmit power of each user in the channel inversion-based methods is usually limited by the worst channel gain. As a result, the received SNR of the global model and the efficiency of power amplifiers (PAs) will significantly decrease, if one of the client channels experiences deep fading \cite{TV:05}. Thanks to the orthogonality of spreading sequences, \alg allows the transmit power to be independent of small-scale fading channel realizations, and thus avoids increasing the dynamic range of the transmit signal, which improves the power efficiency of PAs.
\end{remark}

\begin{remark}[Leverage full channel degrees of freedom]\label{remark:channelgain}
As mentioned before, real modulation cannot exploit full degrees of freedom of the complex channel. To address this limitation, we can borrow the idea of \cite{sery2020tsp}. Specifically, the fading channel between each client and the BS can be written as $$h_k = \tilde h_k e^{j\phi_k}, \;\;\forall k = 1,\cdots,K,$$ where $\tilde h_k\triangleq |h_k|\in \mathbb{R}_+$ and $\phi_k\triangleq\angle h_k\in[0, 2\pi)$ represent channel amplitude (gain) and phase, respectively. For a given $x_{k}^i$, client $k$ transmits symbol $x_{k}^i e^{-j\phi_k}$, where $e^{-j\phi_k}$ is the phase correction term as suggested in \cite{sery2020tsp}. Hence the received signal at the BS can be written as
\begin{equation*}
    \ybf_i = \sum_{k = 1}^K \abf_k h_k e^{-j\phi_k}x_k^i + \nbf_i = \sum_{k = 1}^K \abf_k  \tilde h_k x_k^i + \nbf_i, \forall i = 1,\cdots,d.
\end{equation*}
By phase correction, the imaginary part of $h_k$ is projected to the real domain and the full channel gain can be leveraged without sacrificing any other advantages of \algg. We note that same as in \cite{sery2020tsp}, this approach requires each client $k$ to have the channel phase information $\phi_k$, which is weaker than the complete CSIT $h_k$ but stronger than the standard \algg.

Note that we consider the imperfect channel estimation previously in Step $2$ for uplink communication. For simplicity, we assume that each client has perfect partial CSIT (channel phases) obtained from downlink channel estimation here. One reason for this different consideration is that the transmit power at the BS is usually much larger than those at devices, which naturally ensures more accurate downlink channel estimation than uplink. Additionally, even if the phase correction term suffers from estimation error, as long as the error is within $[-\pi/2,\pi/2)$, the proposed design is still valid \cite{sery2020tsp}, only at a cost of some channel gain lost. 
\end{remark}


\begin{remark}[Extensions on NOMA systems]
Another limitation of \alg is that the number of orthogonal spreading sequences is fixed for a given sequence length. To expand the size of set $\mathcal{A}$, the system would need to adopt longer spreading sequences, which consumes more bandwidth. We first note that although there may exist a large number of clients in an FL task, the number of actively participating clients in each learning round is usually relatively small (due to client selection), which implies the bandwidth cost will not be too significant for our design. Second, as an alternative, the system can adopt non-orthogonal spreading sequences to improve the scalability without the cost of extra bandwidth. Applying non-orthogonal spreading sequences is consistent with non-orthogonal multiple access (NOMA) systems, which is an emerging technology for massive machine-type communications (mMTC) applications. For non-orthogonal spreading sequences, the requirement in \eqref{eq:orthogonalSC} becomes
\begin{equation*}
    \expt\squab{\abf_i^T\abf_i} = 1,\;\;\forall i  \text{~~and~~}\expt\squab{\abf_i^T\abf_j} = 0, \;\;\forall i\neq j,
\end{equation*}
which can be achieved by random Gaussian vectors. Note that Lemma \ref{lemma:cauchyRndVar} still holds for non-orthogonal spreading sequences, since it allows an arbitrary covariance matrix of random Gaussian vectors $Y$ and $Z$. Non-orthogonal spreading sequences will introduce inter-symbol interference besides noise when decoding the global model parameters in Step $4$. Therefore, the convergence bound developed in Section \ref{sec:CvgAna} can be regarded as a lower bound for NOMA systems. 

\end{remark}

\section{Differential Privacy Analysis}\label{sec:privacy}
In this section, we analyze the DP level achieved by \algg. We begin by introducing the basic concepts of DP in FL, and then prove that \alg achieves different levels of DP via the adjustment of the size of spreading sequence set $N$ and the number of involved clients $K$. Both item-level and client-level DP are analyzed. The DP guarantee not only considers multiple sources of randomness in wireless FL, including random mini-batch in SGD, the Cauchy noise, and the random client participation, but also reveals the influence of multiple learning rounds. 

\subsection{Preliminaries}
We first introduce the concept of \emph{neighboring datasets}. We say that two datasets $\mathcal{D}$ and $\mathcal{D}^{\prime}$ are neighboring, written as $\mathcal{D}\sim \mathcal{D}^{\prime}$, if they differ in at most one sample.  
Based on this concept, we state the standard definition of $(\epsilon, \delta)$-DP as follows.
\begin{defi}[$(\epsilon, \delta)$-DP \cite{dwork2014algorithmic}]\label{defi:dp}
A randomized algorithm $\mathcal{M}:X^n\rightarrow \mathcal{R}$ provides $(\epsilon, \delta)$-DP with $0 \leq \delta <1$, if for all pairs of neighboring datasets $\mathcal{D}\sim \mathcal{D}^{\prime}$ and all measurable sets of outcomes $S\subseteq \mathcal{R}$, we have
\begin{equation*}
    \prob[\mathcal{M}(\mathcal{D})\in S]\leq e^\epsilon \prob[\mathcal{M}(\mathcal{D}^{\prime})\in S] + \delta.
\end{equation*}
\end{defi}
We say that a randomized algorithm achieves $\epsilon$-DP (also known as \emph{pure} DP) if it satisfies $(\epsilon,\delta)$-DP with $\delta = 0$. The common interpretation of $\delta$  is the ``leakage probability", i.e., $(\epsilon,\delta)$-DP is $\epsilon$-DP ``except with probability $\delta$”. 

\zixiangbb{We next introduce the definition of Rényi DP \cite{mironov2017renyi}. As a generalization of $(\epsilon,\delta)$-DP, it can help us obtain a tighter $(\epsilon,\delta)$-DP bound converted from its composition, which is particularly attractive in FL due to its multiple learning rounds.}
\begin{defi}[$(\alpha,\epsilon)$-Rényi DP \cite{mironov2017renyi}]
A randomized algorithm $\mathcal{M}:X^n\rightarrow \mathcal{R}$ is said to provide $(\alpha,\epsilon)$-Rényi DP, if for any pair of neighboring datasets $\mathcal{D}\sim \mathcal{D}^{\prime}$ it holds that
\begin{equation*}
    D_\alpha(\mathcal{M}(\mathcal{D})||\mathcal{M}(\mathcal{D}^\prime))\leq \epsilon,
\end{equation*}
where
\begin{equation*}
    D_\alpha(\mathcal{M}(\mathcal{D})||\mathcal{M}(\mathcal{D}^\prime))\triangleq \frac{1}{\alpha - 1}\log \int_{-\infty}^{+\infty} P(x)^\alpha Q(x)^{1 - \alpha} dx
\end{equation*}
is the Rényi divergence. Specially, for $\alpha = + \infty$, we have
\begin{equation*}
    D_{\infty}(\mathcal{M}(\mathcal{D})||\mathcal{M}(\mathcal{D}^\prime)) = \sup_{x \in \text{supp} Q}\log \frac{P(x)}{Q(x)}.
\end{equation*}
\end{defi}
In the AirComp FL design, \zixiangbb{decoding the global model from the received signal can be regarded as a randomized mechanism on \zixiangr{$\mathcal{D} = \bigcup_{k = 1}^M\mathcal{D}_k$}, where $\mathcal{D}$ is the union of the local datasets of all \zixiangr{$M$ clients throughout the whole FL task}. We denote this randomized mechanism as
\begin{equation}\label{eq:CauchyMechanism}
    \mathcal{M}(\mathcal{D})\triangleq \tilde{\xbf}_t = g (\mathcal{D}) + \nbf_t,
\end{equation}
where $g(\mathcal{D}) = \xbf_t$ is the noise-free summation of model differentials \zixiangr{at the $t$-th learning round}, and $\nbf_t$ is a random noise following a certain distribution. The randomness of the mechanism comes from the \dprev{random client participation,} random mini-batch SGD, and the random noise.}

To determine the DP level, we next define the \emph{global sensitivity function} for the operator $g(\cdot)$ as
\begin{equation}\label{eq:GS}
    GS_{g} = \max_{\mathcal{D},\mathcal{D}^{\prime}}\norm{g(\mathcal{D}) - g(\mathcal{D}^\prime)}_2,
\end{equation}
where $\mathcal{D}^\prime$ is a neighboring dataset of $\mathcal{D}$.  As mentioned in Section \ref{sec:model}, for uplink communications, we ensure that $\norm{\mathsf{x}_t^k}_2\leq C$. Therefore, we have 
\begin{align}\label{eq:GSbound}
     GS_{g} & = \max_{\mathcal{D},\mathcal{D}^{\prime}}\norm{g(\mathcal{D}) - g(\mathcal{D}^\prime)}_2 = \max_{\mathsf{x}_t^k,{\mathsf{x}_t^k}^{\prime}}\norm{\mathsf{x}_t^k - {\mathsf{x}_t^k}^\prime}_2  \leq \max_{\mathsf{x}_t^k,{\mathsf{x}_t^k}^{\prime}}\squab{\norm{\mathsf{x}_t^k}_2 + \norm{{\mathsf{x}_t^k}^\prime}_2}\leq 2C,
\end{align}
where ${\mathsf{x}_t^k}^{\prime}$ denotes the noise-free model differential from client $k^{\prime}$ whose local dataset is swapped in a random data sample. 

\subsection{Item-level Differential Privacy}

The Rényi DP guarantee of \alg in a single learning round is given in Theorem~\ref{theo:RDP}. 

\begin{theorem}\label{theo:RDP}
    Assume a spreading sequence set $\mathcal{A}$ containing $N$ unique sequences and $M$ total clients involved in an FL task. In each learning round, \dprev{$K < \min\{N, M\}$} clients are independently and uniformly randomly selected to participate in FL. With local dataset $\mathcal{D}_k$ of size $D$ and mini-batch size $d_{\text{batch}} \triangleq |\xi| < D$, \alg provides $(\alpha,\epsilon)$-Rényi DP for the global model, where
    \dprev{\begin{equation}\label{eq:rdp}
        \epsilon = \f{1}{2}\alpha\log^2\left(1 + \zixiangbbb{\frac{qp}{1+qp}\frac{ 2C\sqrt{C^2 + \gamma^2} + 2C^2 }{\gamma^2}}\right)=O\left(\f{\alpha q^2p^2 }{\zixiangbbb{\gamma}^2}\right),
    \end{equation}
    with $q\triangleq\frac{d_{\text{batch}}}{D+1-d_{\text{batch}}}$, $p\triangleq \frac{K}{M}$, and $\gamma = N - K$.}
\end{theorem}
\begin{proof}
See Appendix \ref{app:proofRDP}.
\end{proof}

Based on the composition rule of Rényi DP, we next establish the $(\epsilon, \delta)$-DP guarantee for the overall uplink communications in an FL task of $T$ rounds.

\begin{theorem}\label{theo:compositeRDP}
Consider a wireless FL task with $T$ learning rounds, a spreading sequence set $\mathcal{A}$ containing $N$ unique sequences, and \dprev{$K < \min\{N, M\}$} clients are independently and uniformly randomly selected in each round. With local dataset $\mathcal{D}_k$ of size $D$ and mini-batch size $d_{\text{batch}} < D$, \alg provides $(\epsilon^\prime,\delta)$-DP for the entire FL task, where
    \begin{align}\label{eq:compositerdp}
        \epsilon^\prime & = \sqrt{2T\log(1/\delta)}\log\left(1 + \zixiangbbb{\dprev{\frac{qp}{1+qp}}\frac{ 2C\sqrt{C^2 + \gamma^2} + 2C^2 }{\gamma^2}}\right) \nonumber \\
        & + \f{1}{2}T\log^2\left(1 + \zixiangbbb{\frac{qp}{1+qp}\frac{ 2C\sqrt{C^2 + \gamma^2} + 2C^2 }{\gamma^2}}\right) \sim \dprev{\tilde{O}\left(\f{\sqrt{T}qp}{\zixiangbbb{\gamma}}\right),
        }
    \end{align}
\dprev{with $q\triangleq\frac{d_{\text{batch}}}{D+1-d_{\text{batch}}}$, $p\triangleq \frac{K}{M}$, and $\gamma = N - K$.} 
    
\end{theorem}
\begin{proof}
See Appendix \ref{app:ProofCompositeRDP}.
\end{proof}

\begin{figure}[t]
    \centering
    \subfigure[$q = 0.05, \zixiangbbb{\gamma} = 5$]{\includegraphics[width = 0.32\linewidth]{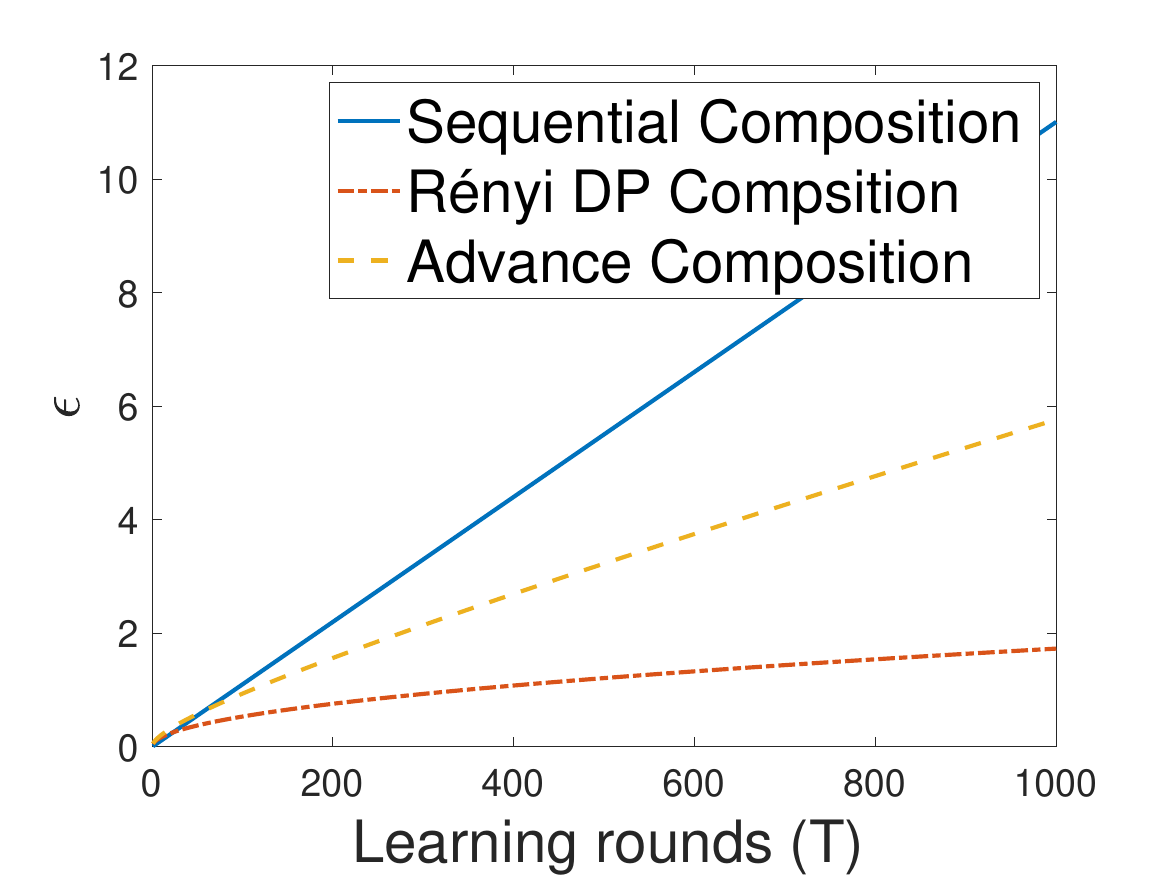}}
    \subfigure[$q = 0.01, \zixiangbbb{\gamma} = 5$]{\includegraphics[width = 0.32\linewidth]{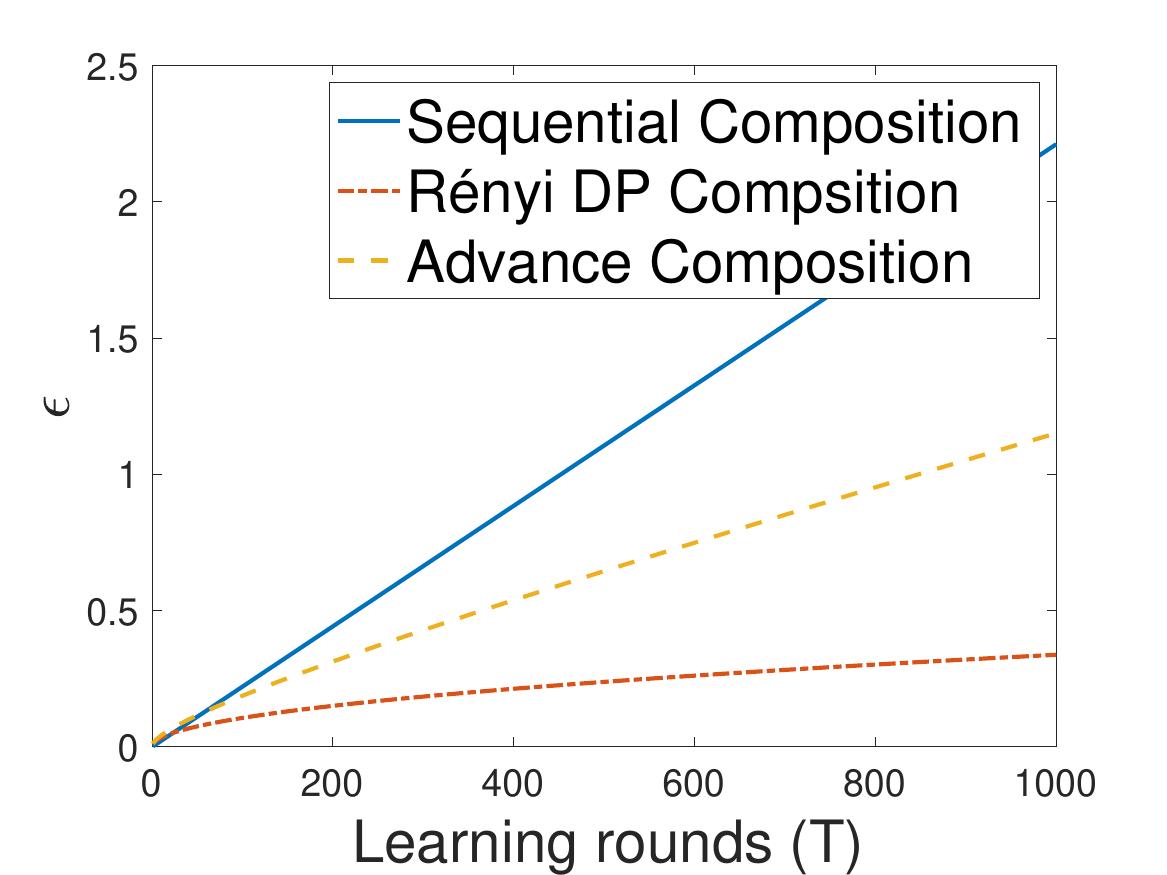}}
    \subfigure[$q = 0.01, \zixiangbbb{\gamma} = 10$]{\includegraphics[width = 0.32\linewidth]{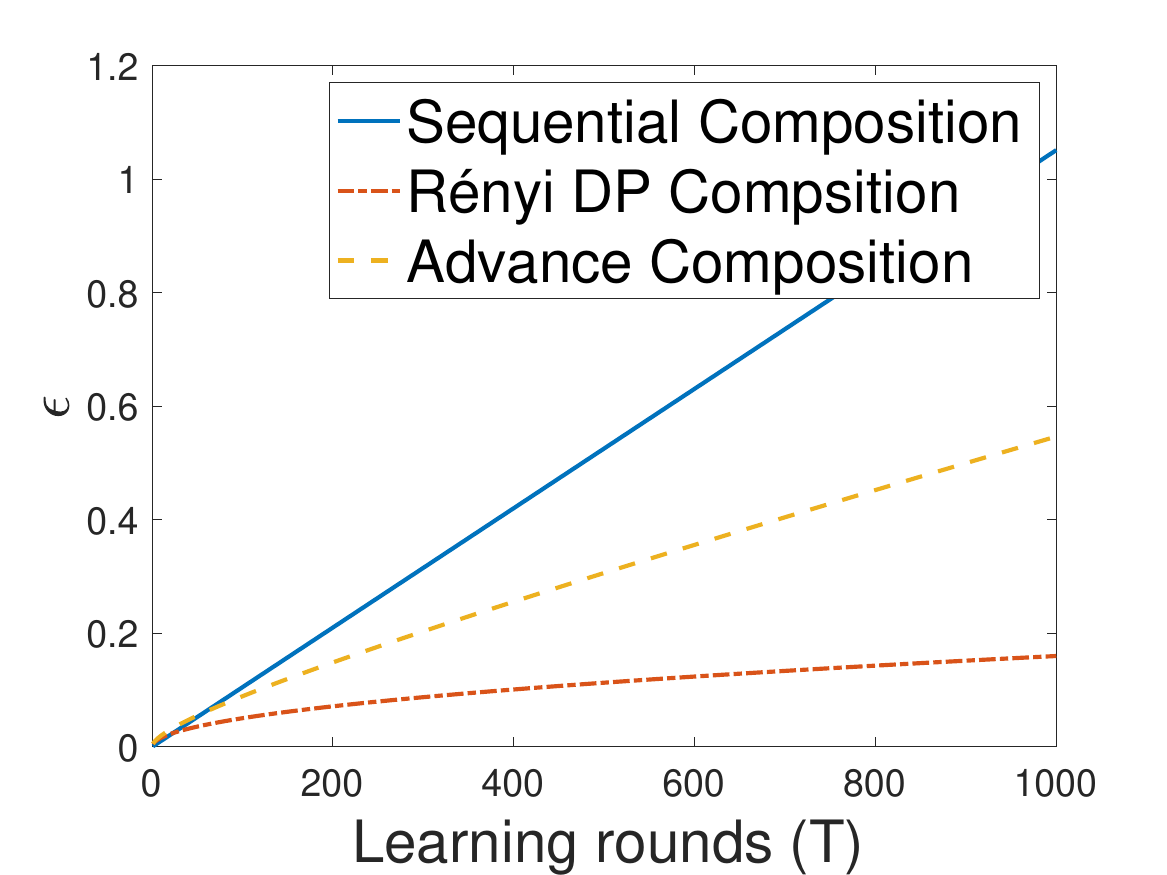}}
    \caption{Comparison of sequential composition, Rényi DP composition, and advanced composition with different sizes of mini-batch SGD and orthogonal sequence set.}
    \label{fig:performanceCompare}
\end{figure}
Theorems \ref{theo:RDP} and \ref{theo:compositeRDP} reveal that, for a given FL configuration, i.e., fixed number of selected clients $K$, mini-batch size ratio $q$, \dprev{participation ratio $p$}, and the number of learning rounds $T$, the expansion of spreading sequence set would achieve a higher level of DP per learning round and for the whole learning task, respectively. In particular, $\epsilon \propto \f{1}{\zixiangbbb{\gamma}^2}$ in Rényi DP for each learning round, and $\epsilon^\prime \propto \f{1}{\zixiangbbb{\gamma}}$ in $(\epsilon,\delta)$-DP for the whole learning task. Since the BS (adversary) has no knowledge of which particular $K$ out of the total $N$ spreading sequences the clients have chosen, increasing the number of spreading sequences results in a heavier tail of the post-processing Cauchy noise, which achieves better privacy protection. 

Guided by Theorem \ref{theo:compositeRDP}, we can adjust $N$ to meet the privacy requirement of a practical FL task. Note that larger noise (better privacy protection) will affect the convergence rate of FL, and we will discuss this impact in detail in Section \ref{sec:CvgAna}. 

\begin{remark}[Choice of DP metrics]
  Note that we use $(\epsilon,\delta)$-DP to evaluate the overall DP guarantee of the FL task. The reason why we choose to utilize Rényi DP to analyze the DP guarantee per learning round is the same as \cite{mironov2019r}: we can obtain a tighter $(\epsilon,\delta)$-DP guarantee from the composition of multiple Rényi-DP mechanisms. This advantage is empirically shown in Fig.~\ref{fig:performanceCompare}, in which we compare the overall DP guarantees v.s. learning rounds of three different composition methods: i) vanilla sequential composition of $(\epsilon,\delta)$-DP; ii) advanced composition of $(\epsilon,\delta)$-DP; iii) $(\epsilon,\delta)$-DP converted from composition of Rényi DP. It clearly demonstrates that under different system configurations, Theorem \ref{theo:compositeRDP} unanimously provides the tightest $(\epsilon,\delta)$-DP guarantee, \zixiangbb{which is consistent with the theoretical and experimental results in \cite{mironov2017renyi}.} 
\end{remark}

\subsection{Client-level Differential Privacy}
So far, we have focused on the standard item-level DP that protects a single data sample of a certain local dataset. For FL, another DP concept called \emph{client-level DP} (also known as user-level DP) is also important. 
The definition of client-level DP follows similarly from Definition \ref{defi:dp}, with a slight change that the neighboring dataset pair $\mathcal{D}\sim\mathcal{D}^\prime$ differs in at most \emph{all data samples of one single client}. As a result, client-level DP protects privacy when the entire data from a certain client is swapped. Intuitively, this guarantees that the participation of a client cannot be inferred by observing the received signals. 

From the previous discussion of global sensitivity, the output $\textsf{x}_t^k$ is always bounded even though the entire dataset of a certain client changes. Therefore, our method intuitively satisfies the client-level DP, and we formally establish the following guarantee.

\begin{theorem}\label{theo:UL-RDP}
    Given a spreading sequence set $\mathcal{A}$ containing $N$ unique sequences, \alg provides $(\alpha,\epsilon)$-client level Rényi DP, \dprev{when $K$ clients are independently and uniformly randomly selected from the total $M$ clients in each learning round}, where
    \begin{equation}\label{eq:ul-rdp}
        \epsilon = \f{1}{2}\alpha\log^2\left(1 + \zixiangbbb{\dprev{p}\frac{ 2C\sqrt{C^2 + \gamma^2} + 2C^2 }{\gamma^2}}\right)=O\left(\f{\alpha\dprev{p^2} }{\zixiangbbb{\gamma}^2}\right).
    \end{equation}
    Moreover, for total $T$ learning rounds, \alg provides $(\epsilon^{\prime\prime},\delta)$-client level DP for the entire FL task, where
    \begin{align}\label{eq:composite-ulrdp}
        \epsilon^{\prime\prime} & = \sqrt{2T\log(1/\delta)}\log\left(1 + \zixiangbbb{\dprev{p}\frac{ 2C\sqrt{C^2 + \gamma^2} + 2C^2 }{\gamma^2}}\right) \nonumber \\
        & + \f{1}{2}T\log^2\left(1 + \zixiangbbb{\dprev{p}\frac{ 2C\sqrt{C^2 + \gamma^2} + 2C^2 }{\gamma^2}}\right) \sim \tilde{O}\left(\f{\sqrt{T}\dprev{p}}{\zixiangbbb{\gamma}}\right).
    \end{align}
\end{theorem}
\begin{proof}
    See Appendix \ref{app:proofULRDP}.
\end{proof}

Theorem \ref{theo:UL-RDP} demonstrates that \alg provides the client-level DP guarantee in a manner that is similar to the item-level DP. However, since the entire local dataset of a certain client would be swapped, we cannot take advantage of the SGD randomness in the analysis of client-level DP. Therefore, $q$ disappears from the client-level DP guarantee.

\subsection{Spreading Sequence Assignment Mechanism}
\label{subsec:AgnmntMchsm}
As we have discussed, the DP guarantee of \alg comes from the fact that the assignment of spreading sequences remains unknown to the BS (adversary). In traditional CDMA systems, the spreading sequence is assigned to each device by the BS. This mechanism becomes invalid in our setting, since we need to make sure that the BS only has knowledge of the spreading sequence set $\mathcal{A}$, not the individual assignment. In practice, we can resort to a trusted third-party to handle the assignment of orthogonal sequences. The design of such a mechanism belongs to the field of \emph{secure multi-party computation (MPC)} \cite{goldreich1998secure,du2001secure} and is out of the scope of this paper. 

In the following, we provide a preliminary reference design based on a random permutation algorithm as illustrated in Fig.~\ref{fig:AgnmntMchsm}. The proposed design allows each user to autonomously choose a unique spreading code without collision. To better explain the mechanism, we first define a \emph{fixed} matrix based on $\mathcal{A}=\{\abf_k, k = 1, \cdots, N \}: \Abf = \squab{\abf_1, \cdots, \abf_N}\in\mathbb{R}^{L\times N}.$ 
We assume that every device that will be involved in the FL task shares a common confidential \texttt{Key}. This key is \emph{a priori} knowledge to clients, yet confidential to the BS (adversary). \zixiangbb{This assumption can be achieved via standard cryptography approaches, e.g., the key-exchange protocol~\cite{diffie2022new}.} When the BS schedules clients to participate in the current learning round, an \texttt{index} $\in \{1,\cdots,K\}$ will be assigned to each client. After that, every client leverages the confidential \texttt{Key} and the current system time $\texttt{SystemTime()}$ to generate a random seed $\texttt{RandSeed()}$. Since the system has been synchronized, each client will obtain the same random seed, and generate a (common) random permutation matrix $\Pbf_{\pi}\in\mathbb{R}^{N\times N}$ based on that. Followed by a random column permutation
$\tilde \Abf = \Abf\Pbf_\pi,$ 
each client can use the $\texttt{index}$-th column $\tilde \Abf$\texttt{[:,index]} as its spreading sequence for the current learning round. Note that as long as the confidential key is not leaked, the BS (adversary) will not know which $K$ of the total $N$ spreading sequences are adopted in the current learning round.

\begin{remark}
We emphasize that the established DP guarantee is for the receiver processing proposed in Section \ref{subsec:alg}. There may exist more sophisticated receiver algorithms, which could conceivably attempt to infer more information about the usage of the orthogonal sequences per learning round from the received signal. However, since we do not make any assumptions on the channel distribution, such approach would be difficult in general. Besides, it is impossible to always accurately identify all used sequences due to the channel noise. Therefore, even under this scenario, the proposed framework still provides privacy, although the DP guarantee dependency may decrease from $O(1/\zixiangbbb{\gamma})$ to $O(1/(\alpha \gamma))$, where $0<\alpha<1$.
\end{remark}

\begin{figure}
    \centering
    \includegraphics[width = \linewidth]{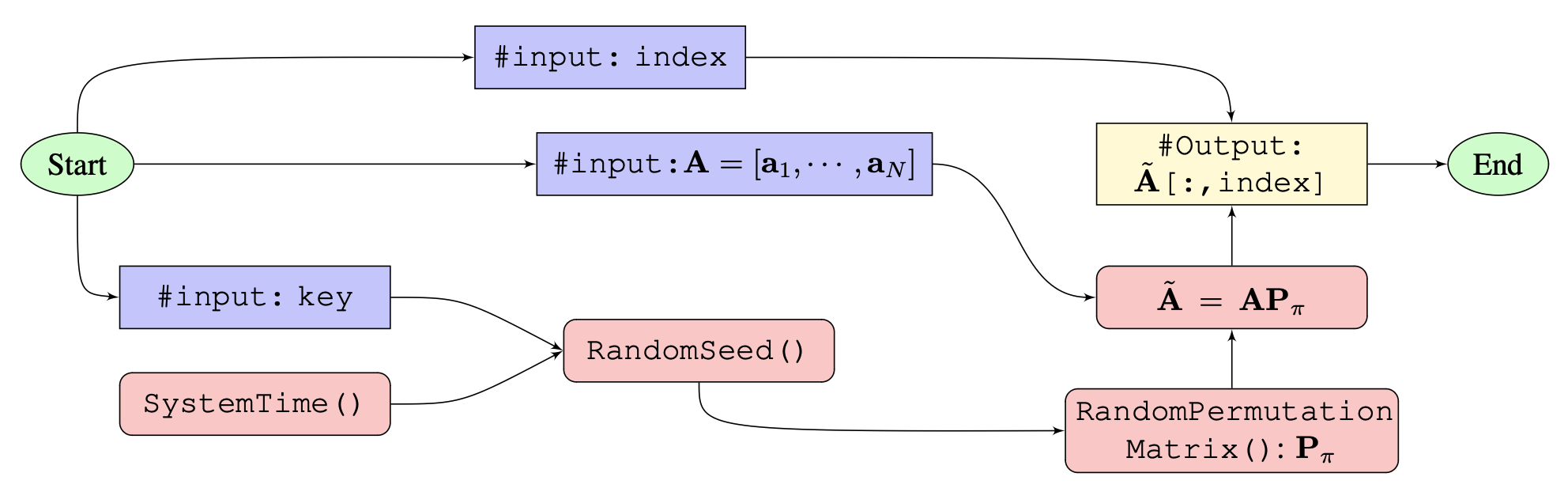}
    \caption{Flowchart of the proposed spreading sequence assignment reference design for any given client.}
    \label{fig:AgnmntMchsm}
\end{figure}

\section{Convergence analysis}
\label{sec:CvgAna}
We analyze the ML model convergence behavior of \alg in this section. We first make the following standard assumptions that are commonly adopted in the convergence analysis of \textsc{FedAvg} and its variants; see \cite{li2019convergence,jiang2018nips,stich2018local,Zheng2020jsac}.  
\begin{assumption}\label{as:1}
\textbf{$L$-smooth:} $\forall~\vect{v}$ and $\vect{w}$, $\norm{f_k(\vbf)-f_k(\wbf)}\leq L \norm{\vbf-\wbf}$;
\end{assumption}
\begin{assumption}\label{as:2}
\textbf{$\mu$-strongly convex:} $\forall~\vect{v}$ and $\vect{w}$, $\left<f_k(\vbf)-f_k(\wbf), \vbf-\wbf\right>\geq \mu \norm{\vbf-\wbf}^2$;
\end{assumption}
\begin{assumption}\label{as:3}
\textbf{Unbiased SGD:} $\forall k \in [M]$, $\expt[\nabla\tilde{f}_k(\wbf)] = \nabla f_k(\wbf)$;
\end{assumption}
\begin{assumption}\label{as:4}
\textbf{Uniformly bounded gradient:} $\forall k \in [M]$, $\expt\norm{\nabla\tilde{f}_k(\wbf)}^2 \leq H^2$ for all mini-batch data.
\end{assumption}

The main challenge, and hence the novelty of our analysis, lies in the Cauchy distributed post-processing noise. We note that a Cauchy distribution has uncertain (infinity) variance. 
To address this issue, the BS applies a \emph{truncation} operation in the interval $[-B, B]$ on the decoded global parameters in \eqref{eq:approx}, before de-normalization:
\begin{equation}\label{eq:truncatedSig}
    \tilde x_i \approx  \max\left( \min \left(\sum_{k = 1}^K x_k^i  + \sum_{k = K + 1}^N \frac{\abf_k^T\nbf_i}{\abf^T_k \nbf_s}, B \right), -B\right)
\end{equation}
where $B\gg C$ and $C$ is a normalization parameter defined in Section~\ref{sec:commmodel}. Note that the truncation operation is universal (albeit sometimes implicit) in almost all practical systems, since the signal values in the processing units are always finite. \zixiangr{As a post-processing procedure, the truncation operation has no impact on the DP guarantee of \algg. Unfortunately, it will introduce a bias in the estimate $\tilde x_i$, which impacts convergence. In particular, the equivalent noise after the limiting operation will be a \emph{truncated Cauchy distribution} with support $[-B -\sum_{k = 1}^K x_k^i, B -\sum_{k = 1}^K x_k^i]$}. \zixiangr{Fortunately, as long as we ensure $B\gg C$ which requires a very mild truncation and thus is easy to satisfy, this operation will only have very limited impact on the received signal, hence does not significantly harm the convergence performance. We also note that biased gradients are very common in various DP-guaranteed SGD methods \cite{du2021dynamic,song2021evading,chen2020understanding}. Moreover, after de-normalization, the biased term is effectively diminishing. Therefore, as we show in Theorem \ref{thm.Conv}, \alg can still maintain a good convergence performance despite of the biased noise, which will also be numerically corroborated in Section \ref{sec:exp}.}

\zixiangr{
\begin{theorem}
\label{thm.Conv} 
With Assumptions 1-4 and $\mu > 2\sqrt{dD(\gamma)}EH/K$, for some $r\geq 0$, if we set the learning rate as $\eta_t = \frac{2}{\mu^\prime(t+r)}$, a wireless system implementing \alg achieves
\begin{equation*}
\begin{split}
     & \expt[f(\wbf_t)] - f^* \leq \frac{L}{(t + r)}
      \left[\frac{4G}{{\mu^\prime}^2} + (1 + r)\norm{\wbf_0 - \wbf^*}^2\right],
\end{split}
\end{equation*}
for any $t\geq 1$, where $\mu^\prime \triangleq \mu - 2\sqrt{dD(\gamma)}EH/K$,
\begin{equation*}\small
 \begin{split}
     G \triangleq & \left(1 + \frac{2\sqrt{dD(\gamma)}}{K}EH\eta_1 \right)  \left(\sum_{k=1}^M \frac{H_k^2}{M^2} + 6L \Gamma + 8(E-1)^2 H^2 + \frac{M-K}{M-1} \frac{4}{K}  E^2 H^2 \right)  + {\frac{4dD(\gamma)}{K^2}  E^2 H^2}.
    \end{split}
\end{equation*}
and
\begin{equation*}
    D(\gamma) = \frac{{\gamma}^2}{C^2\arctan\left(\frac{B+C}{{\gamma}}\right)}\squab{\frac{B}{{\gamma}} - \arctan\left(\frac{B+C}{{\gamma}}\right)}.
\end{equation*}
\end{theorem}
\begin{proof}
See Appendix~\ref{App:thm.Conv}.
\end{proof}}
Theorem \ref{thm.Conv} demonstrates that \alg preserves the $O(1/T)$ convergence rate of SGD for strongly convex loss functions (compared with the noise-free FL convergence). For a fixed initial point, there are multiple factors in the constant $G$ that affect the convergence rate of FL. In particular, $\sum_{k=1}^K \frac{H_k^2}{K^2}$ reveals the \emph{variance reduction} effect of SGD by involving more clients, and $6L \Gamma$ and $8(E-1)^2 H^2$ capture the influence of non-IID dataset and the number of local epochs, respectively. We note that term $D(\gamma)$ in $G$ demonstrates the impact from Cauchy noise, i.e. level of privacy protection. We also note that $D(\gamma)$ is increasing as $\gamma$ becomes larger. It implies that a higher level of privacy protection will decrease the speed of convergence as constant $D(\gamma)$ becomes larger. Therefore, Theorem \ref{thm.Conv} establishes a tradeoff between privacy protection and convergence rate, which can guide the practical system design.

\begin{figure*}[htb]
    \centering
    \subfigure[IID MNIST]{\includegraphics[width = 0.45\linewidth]{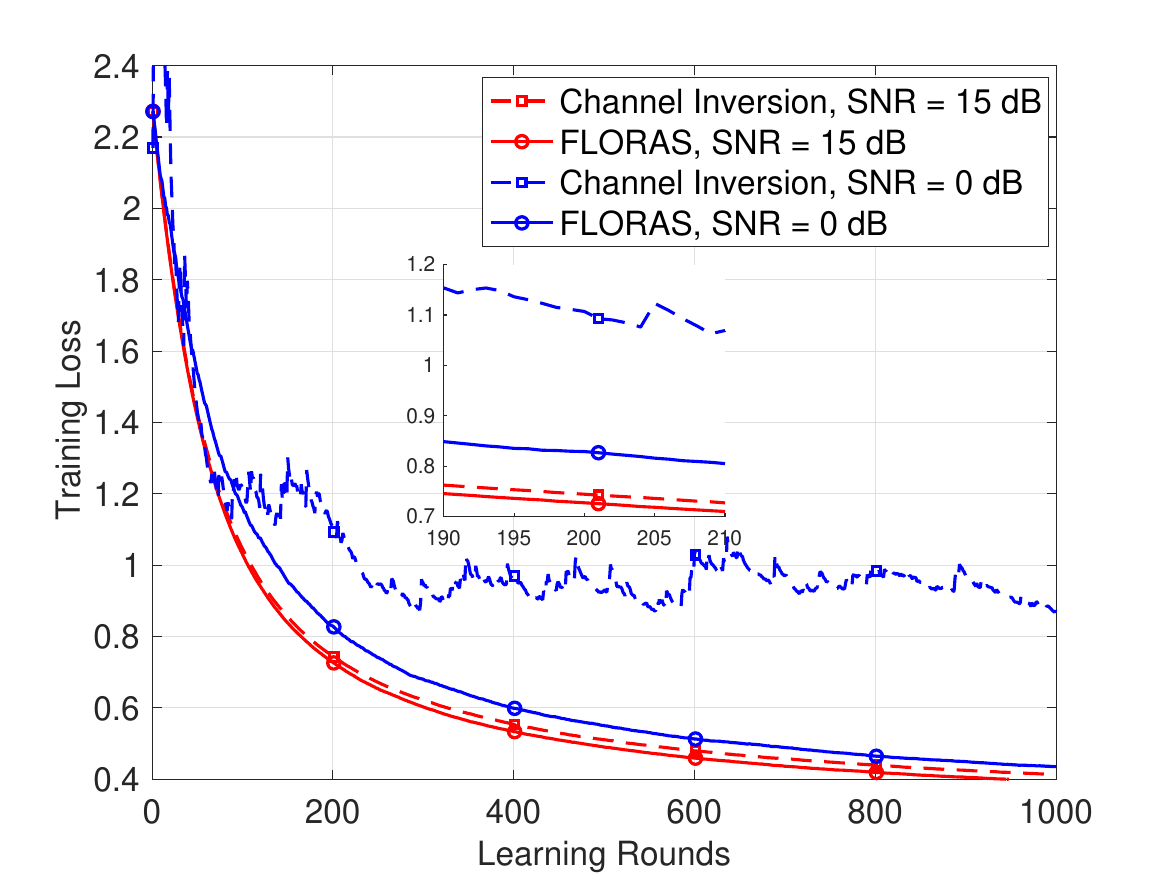}}
    \subfigure[IID MNIST]{\includegraphics[width = 0.45\linewidth]{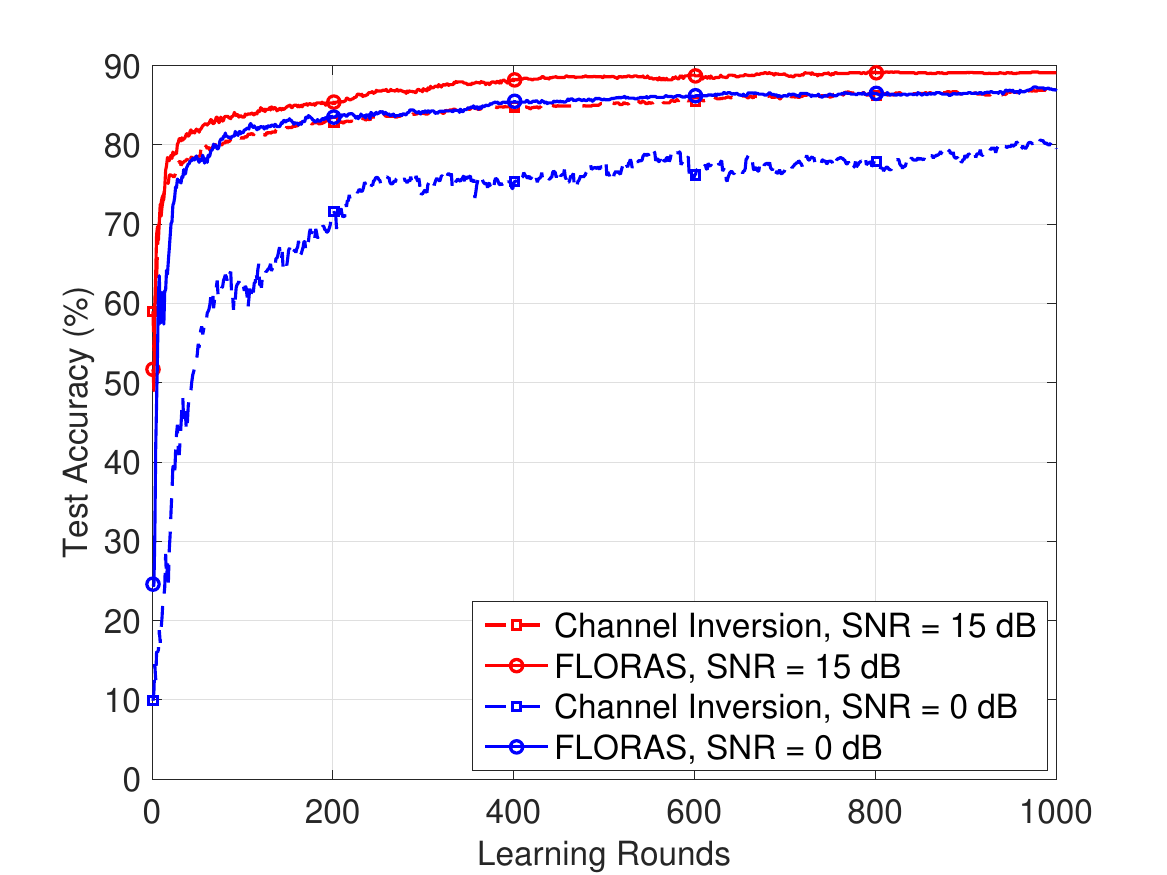}}
    \subfigure[non-IID MNIST]{\includegraphics[width = 0.45\linewidth]{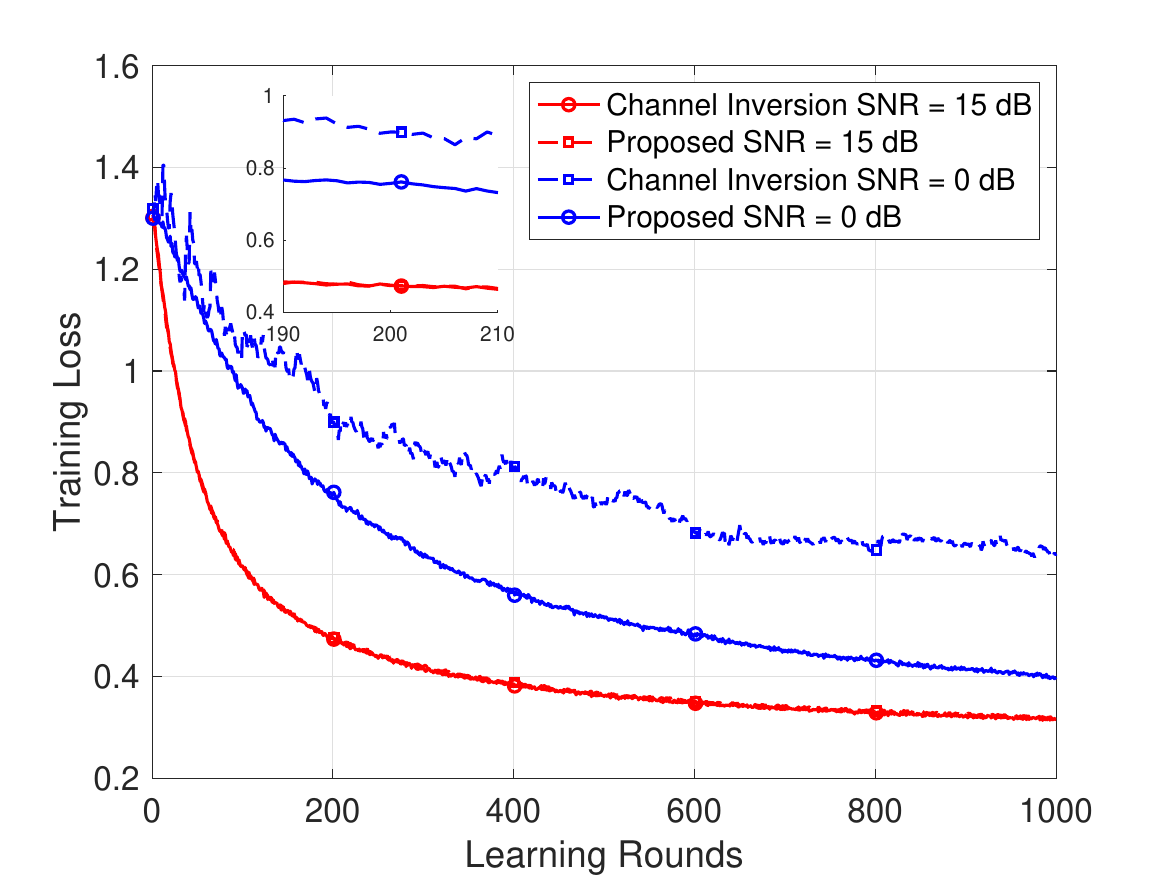}}
    \subfigure[non-IID MNIST]{\includegraphics[width = 0.45\linewidth]{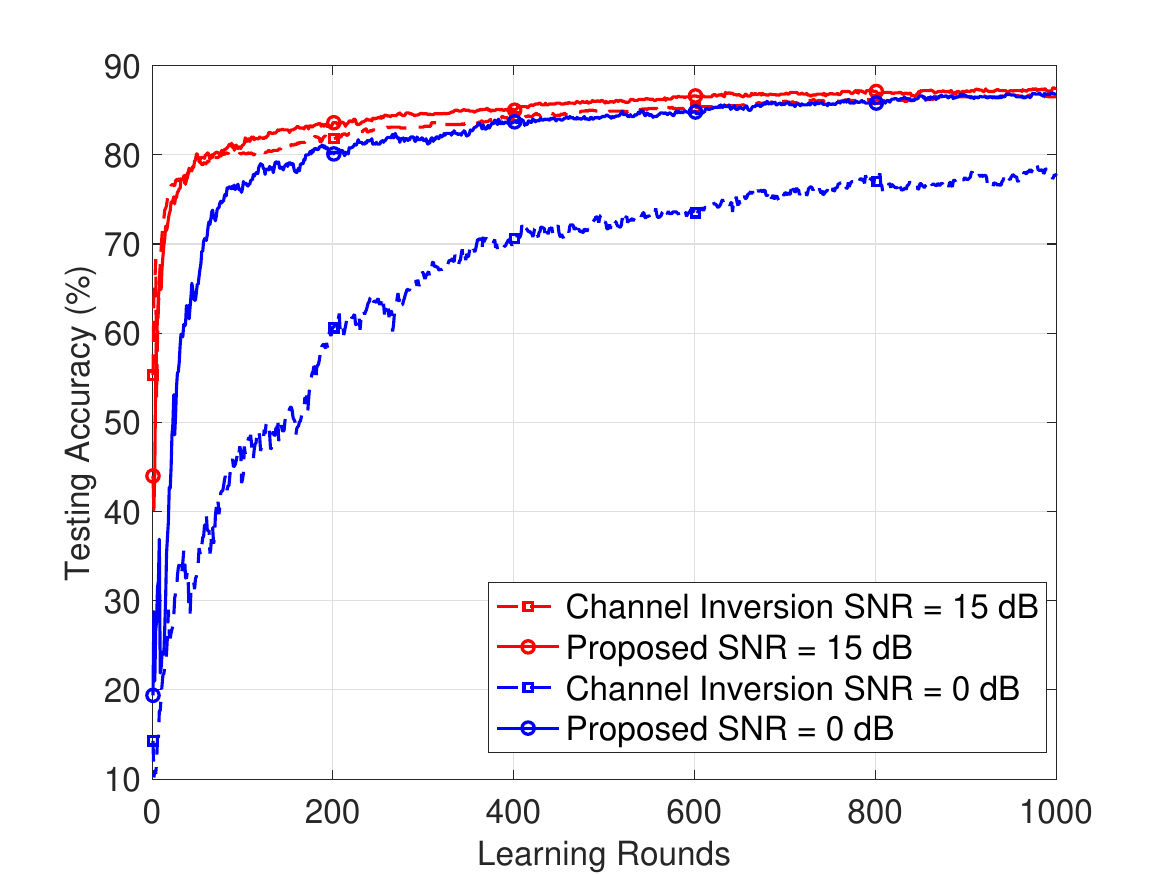}}
    \caption{Training loss/test accuracy versus learning rounds of \alg and channel inversion  for MNIST dataset under SNR $= 0$ and $15$ dB.}
    \label{fig:efficency}
\end{figure*}

\section{Experiments}\label{sec:exp}
In this section, we evaluate the performance of \alg through numerical experiments. We first compare the learning performance of \alg with the widely-investigated channel inversion AirComp method. Then, we evaluate the effect on the ML model convergence rate of various DP levels. In particular, we corroborate the theoretical results via real-world FL tasks on the MNIST dataset. The experiments demonstrate that \alg achieves superior performances under various SNRs and other system configurations. 

Details on the setup of experiments are as follows. All convergence curves are the average of five independent Monte Carlo trials. 
        The MNIST dataset contains multiple handwritten digit figures of $20 \times 20$ pixels. The training set contains $4,000$ examples and is evenly distributed over $K = 20$ clients. For the {IID} case, the data is shuffled and randomly assigned to each client; for the {non-IID} case, the data is sorted by labels and each client is randomly assigned with the data of one label. The test set size is $1,000$. We evaluate \alg and validate the theoretical results on a multinomial logistic regression task. Specifically, let $f(\wbf;x_i)$ denote the prediction model with the parameter $\wbf = (\Wbf,\bbf)$ and the form $f(\wbf;x_i) =\texttt{softmax}(\Wbf x_i + \bbf)$. The loss function is given by
      $\texttt{loss}(\wbf) =  \f{1}{D}\sum_{i = 1}^{D}\texttt{CrossEntropy}(f(\wbf;x_i),\ybf_i) + \lambda\norm{\wbf}^2$.
        We adopt the regularization parameter $\lambda = 0.01$ in the experiments. 

\subsection{Communication Efficiency}
We first evaluate the performance of \alg compared with the channel inversion method. We assume IID Raleigh block fading channel $h_k \sim \mathcal{CN}(0, 1), \forall k=1, \cdots, K$. For channel inversion, we adopt a user admission threshold $0.01$ for the fading channel gain to avoid deep fading. The following parameters are used for training: local batch size $50$, the number of local epochs $E=1$, and learning rate $\eta = 0.005$ and $\eta = 0.001$ for the IID and non-IID case, respectively. 

Fig.~\ref{fig:efficency}-(a) and Fig.~\ref{fig:efficency}-(b) illustrate the training loss and test accuracy performance versus learning round of \alg and channel inversion in high (red line) and low (blue) SNR regimes, respectively. For high SNR ($15$ dB), \alg and channel inversion have similar performances. Although transmitter fading channel cancellation in channel inversion limits the maximum transmission power of each user, its performance does not deteriorate, since noise is not the dominant factor of the convergence. However, we note that, unlike \algg, channel inversion requires full CSIT at each client, which not only consumes larger communication overhead, but also increases the dynamic range of the signal, bringing higher hardware cost. The advantages of \alg become conspicuous in the low SNR regime ($0$ dB), in which noise becomes the dominant factor of the convergence rate. \alg allows all participated clients to make full use of transmit power and achieves significantly better performance. As shown in Fig.~\ref{fig:efficency}-(b), \alg achieves about $7.5 \%$ higher test accuracy compared with channel inversion at SNR = $0$ dB. This phenomenon becomes more notable in the non-IID dataset as shown in Fig.~\ref{fig:efficency}-(c) and Fig.~\ref{fig:efficency}-(d), where the performance gap of test accuracy between \alg and channel inversion further grows to $10. 2\%$.

\begin{figure*}[htb]
    \centering
    \subfigure[IID MNIST]{\includegraphics[width = 0.45\linewidth]{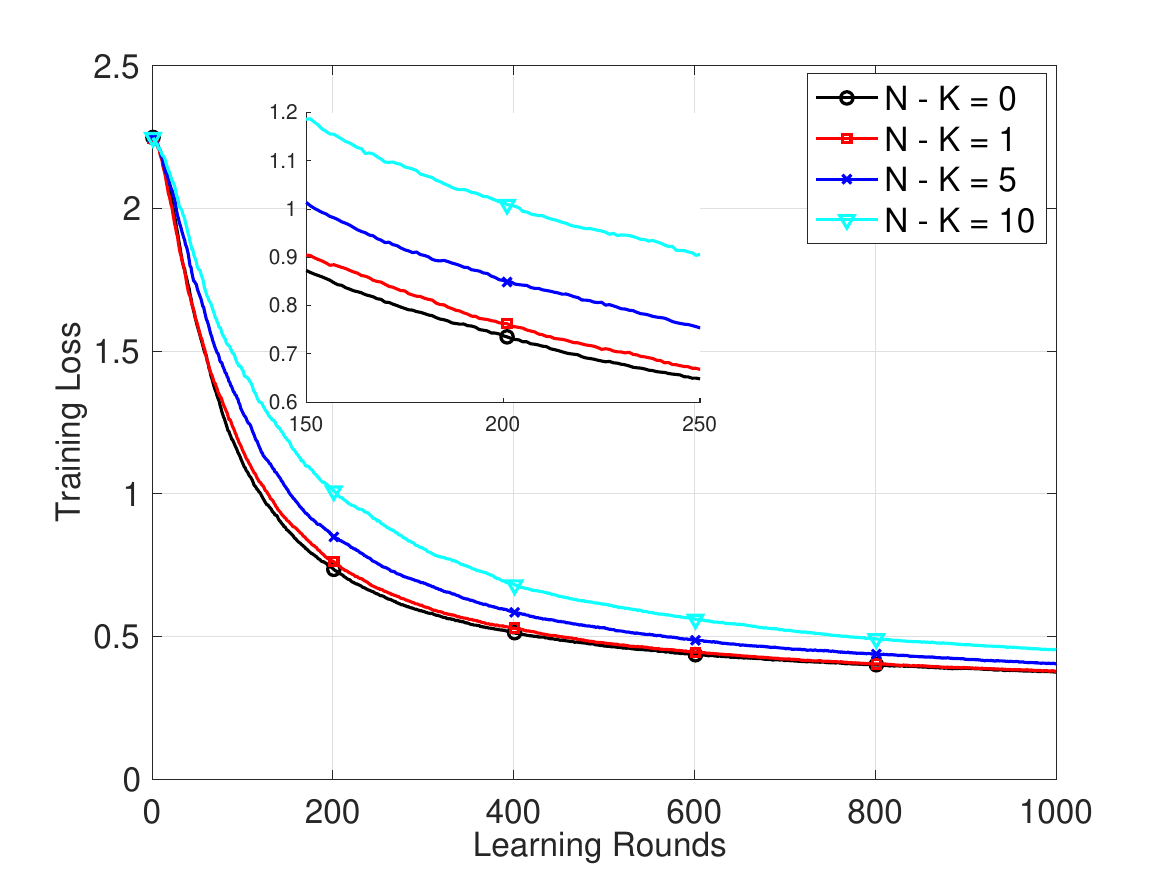}}
    \subfigure[IID MNIST]{\includegraphics[width = 0.45\linewidth]{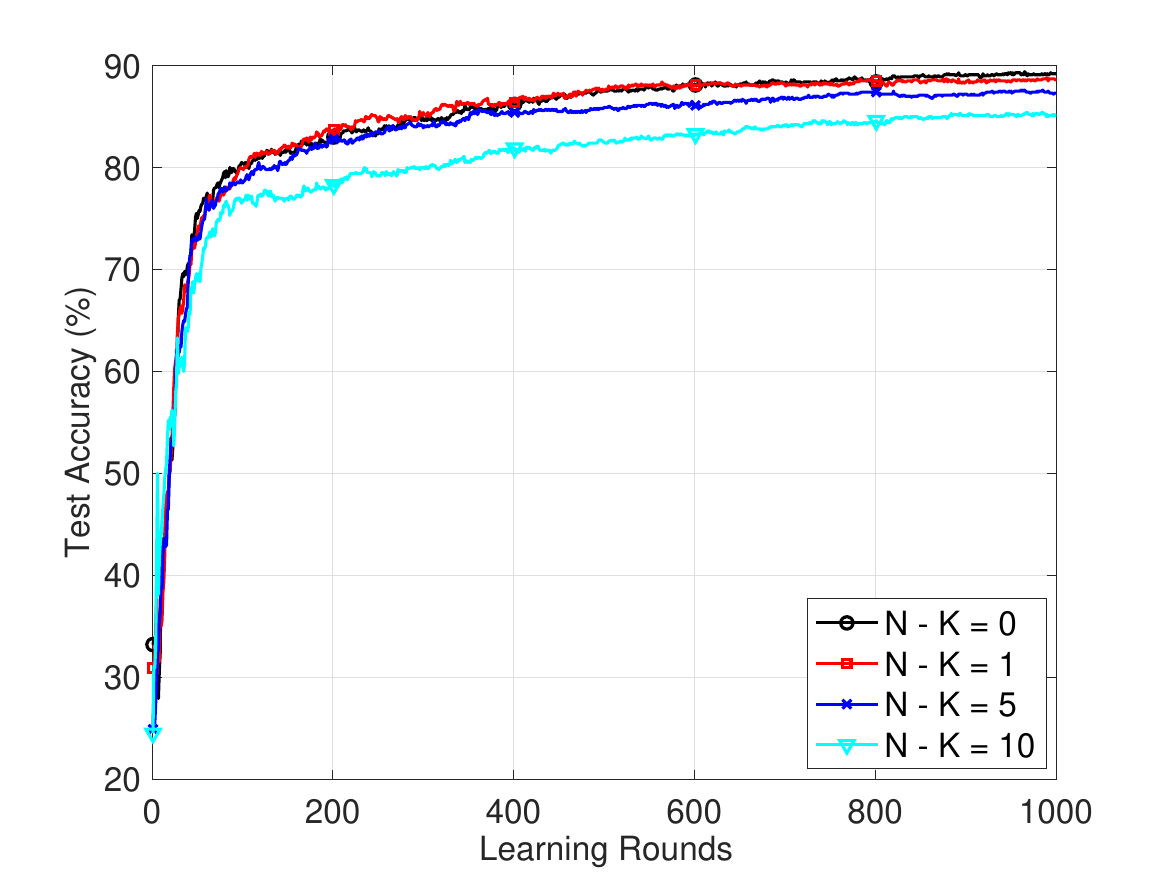}}
    \subfigure[non-IID MNIST]{\includegraphics[width = 0.45\linewidth]{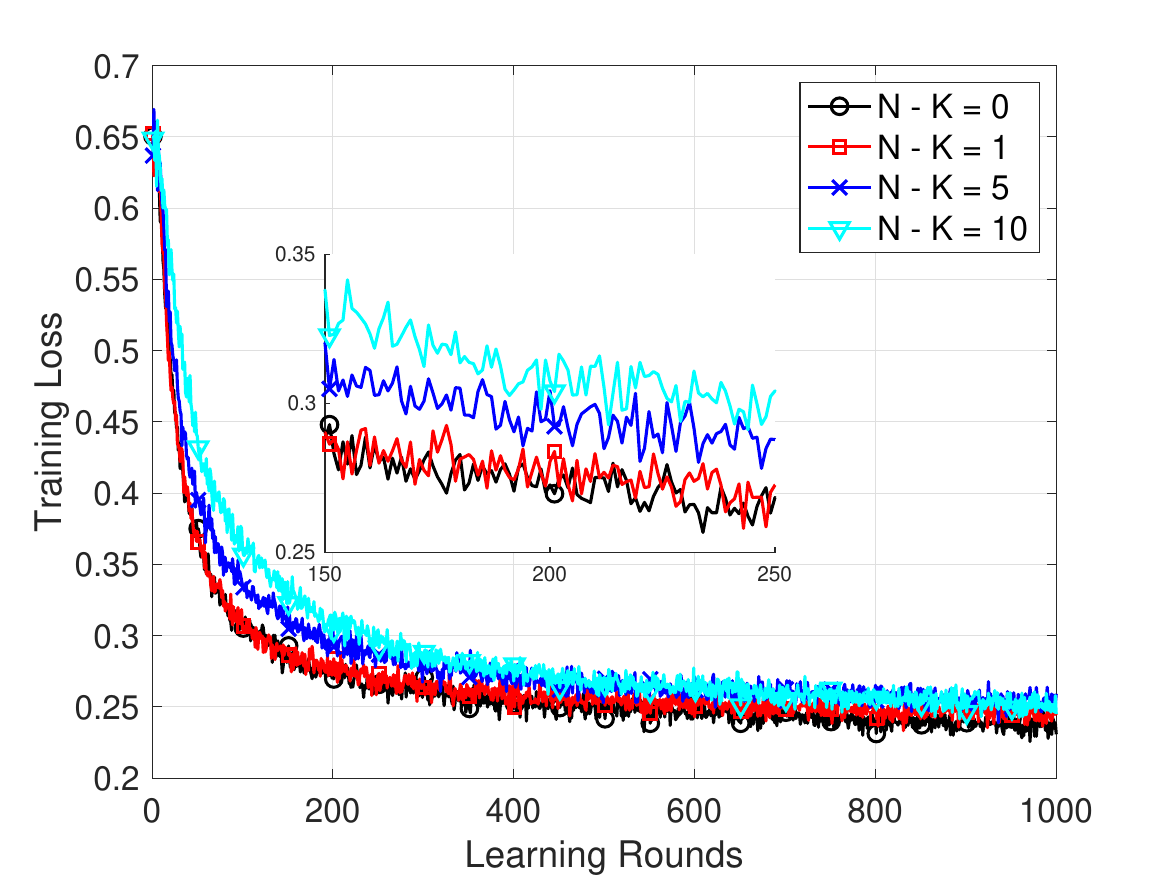}}
    \subfigure[non-IID MNIST]{\includegraphics[width = 0.45\linewidth]{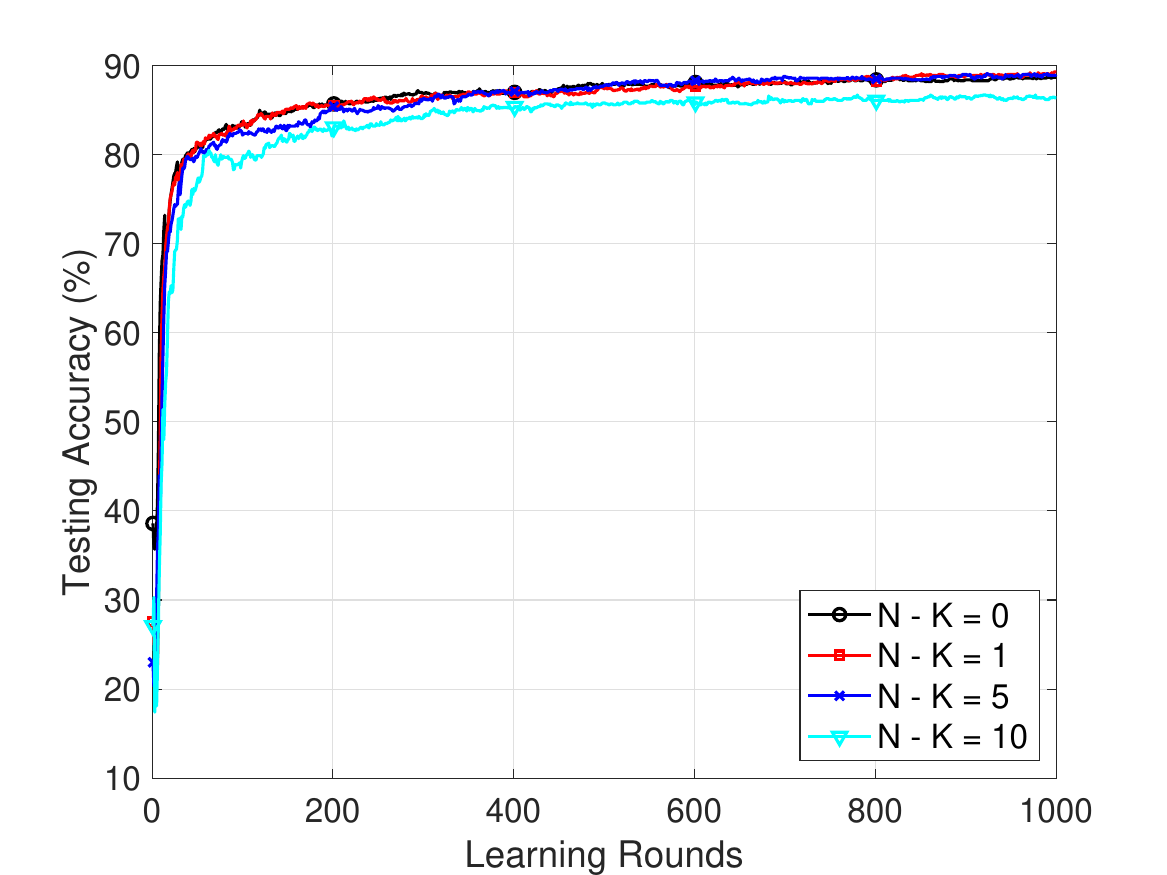}}
    \caption{Training loss/test accuracy versus learning rounds of \alg with different differential privacy levels for MNIST dataset under SNR $= 20$ dB.}
    \label{fig:dp}
\end{figure*}

\subsection{Differential Privacy}
We also evaluate the convergence performance versus different levels of DP. We keep SNR $= 20$ dB and number of selected user $K = 20$, while changing the size of the orthogonal sequence set $\mathcal{A}$ from $20, 21, 25$ to $30$, i.e., $\zixiangbbb{\gamma} = 0, 1, 5$ and $10$. In each learning round, each user selects its own signature from set $\mathcal{A}$ via the method in Section \ref{subsec:AgnmntMchsm}. As discussed in Section \ref{sec:privacy}, the larger size of set $\mathcal{A}$, the higher DP level \alg achieves. The following parameters are used for training: local batch size $20$, the number of local epochs $E=1$, and learning rate $\eta = 0.005$ and $\eta = 0.001$ for IID and non-IID cases, respectively. The training losses with different DP levels are illustrated in Fig.~\ref{fig:dp}-(a) and Fig.~\ref{fig:dp}-(c) of IID and non-IID cases, respectively. It is clear that although a higher privacy level decreases the convergence rate, the machine learning model can still converge with almost the same training loss as the no-differential-privacy case ($\zixiangbbb{\gamma} = 0$), which is consistent with the theoretical analysis in Section \ref{sec:CvgAna}. The test accuracies in Fig.~\ref{fig:dp}-(b) and Fig.~\ref{fig:dp}-(d) further validate the effectiveness of \algg. We can see that the test accuracies for moderate DP levels ($\zixiangbbb{\gamma} = 1$ and $5$) are almost the same as the case of $\zixiangbbb{\gamma} = 0$, i.e., we can achieve certain DP levels \emph{almost for free}. Even when $\zixiangbbb{\gamma} = 10$, the test accuracy loss is tiny, with about $3.5\%$ and $2.5\%$ compared with the $\zixiangbbb{\gamma} = 0$ case in the IID and non-IID datasets, respectively.

\section{Conclusion}\label{sec:concl}

We have proposed \texttt{FLORAS}, a differentially private AirComp FL framework. Compared with the channel inversion method, \alg does not require CSIT and has much more robust performance in the low SNR regime, which is crucial for IoT applications. The flexibility of adjusting the size of the orthogonal sequence set allows us to control both item-level and client-level DP guarantees of the system. From the analyses of convergence and DP of \algg, we have established the tradeoff between convergence rate and privacy preservation, which has been further validated by experiments on real-world FL tasks.

\appendices
\section{Proofs of Theorems \ref{theo:RDP}, \ref{theo:compositeRDP}, and \ref{theo:UL-RDP}}\label{app:rdp}
\subsection{Lemmas}
We first establish the necessary lemmas for the proofs. 
\begin{lemma}\label{lemma:CauchyMinMax}
    Let $P$ and $Q$ be two Cauchy probability distributions $\ssf{Cauchy}(0, \gamma)$ and $\ssf{Cauchy}(\theta, \gamma)$, with PDF $P(x) = \gamma/\pi(x^2 + \gamma^2)$ and $Q(x) = \gamma/\pi[{(x - \theta)^2 + \gamma^2}]$, respectively. $Q(x)/P(x) = (x^2 + \gamma^2)/[(x - \theta)^2 + \gamma^2]$ reaches its maximum at \zixiangbbb{$x_{\max} = \frac{1}{2}(\theta + \sqrt{\theta^2 + 4\gamma^2})$ with $\squab{{Q(x)}/{P(x)}}_{\max} = [\sqrt{\theta^2 + 4\gamma^2} + \theta]/[\sqrt{\theta^2 + 4\gamma^2} - \theta]$, and its minimum at $x_{\min} = \frac{1}{2}(\theta - \sqrt{\theta^2 + 4\gamma^2})$ with $\squab{{Q(x)}/{P(x)}}_{\min} = [\sqrt{\theta^2 + 4\gamma^2} - \theta]/[\sqrt{\theta^2 + 4\gamma^2} + \theta]$.}
\end{lemma}
\begin{proof}
    The results can be directly obtained by taking the first-order derivative of $\frac{Q(x)}{P(x)}$ and setting $\frac{\partial}{\partial x}\frac{Q(x)}{P(x)} = 0$. 
\end{proof}

\begin{lemma}[Proposition $3.3$ in \cite{bun2016concentrated}]\label{lemma:RenyiDPbound}
Let $P$ and $Q$ be probability distributions satisfying $D_\infty (P||Q) \leq \epsilon$ and $D_\infty (Q||P) \leq \epsilon$. Then $D_\alpha (P||Q) \leq \f{1}{2}\alpha\epsilon^2$ for all $\alpha > 1$.
\end{lemma}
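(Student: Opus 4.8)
The plan is to reduce the bound to a convexity statement about the cumulant generating function of the log-likelihood ratio. First I would note that $D_\infty(P\|Q)\le\epsilon<\infty$ forces $P\ll Q$ and, by symmetry, $D_\infty(Q\|P)\le\epsilon<\infty$ forces $Q\ll P$; hence $P$ and $Q$ are mutually absolutely continuous and $f:=\log(dP/dQ)$ is well defined with $-\epsilon\le f\le\epsilon$ almost surely under either measure. Rewriting the R\'enyi divergence, $(\alpha-1)D_\alpha(P\|Q)=\log\int (dP/dQ)^{\alpha}\,dQ=\log\mathbb{E}_Q[e^{\alpha f}]=:g(\alpha)$, so it is enough to prove $g(\alpha)\le\tfrac12\alpha(\alpha-1)\epsilon^2$ for every $\alpha>1$.

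Next I would analyze $g(\lambda):=\log\mathbb{E}_Q[e^{\lambda f}]$ for $\lambda\ge 0$. Since $f$ is bounded, every such expectation is finite and differentiation under the integral sign is justified. Two boundary values come for free: $g(0)=0$ and $g(1)=\log\mathbb{E}_Q[dP/dQ]=\log\int dP=0$. The crux is the uniform bound $g''(\lambda)\le\epsilon^2$: a standard computation identifies $g''(\lambda)$ with $\mathrm{Var}_{Q_\lambda}(f)$, where $Q_\lambda$ is the exponentially tilted law $dQ_\lambda\propto e^{\lambda f}\,dQ$; since $f$ is $Q$-a.s., hence $Q_\lambda$-a.s., confined to an interval of length $2\epsilon$, Popoviciu's inequality (the variance of a random variable supported on an interval of length $\ell$ is at most $\ell^2/4$) gives $\mathrm{Var}_{Q_\lambda}(f)\le(2\epsilon)^2/4=\epsilon^2$.

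Finally I would compare $g$ with the quadratic $h(\lambda):=\tfrac12\epsilon^2\lambda(\lambda-1)$, which also satisfies $h(0)=h(1)=0$ and has $h''\equiv\epsilon^2$. Then $\phi:=g-h$ is concave because $\phi''=g''-\epsilon^2\le 0$, and it vanishes at $\lambda=0$ and $\lambda=1$. Writing $1$ as the convex combination $1=\tfrac{\alpha-1}{\alpha}\cdot 0+\tfrac1\alpha\cdot\alpha$ and applying concavity of $\phi$ at these three points yields $0=\phi(1)\ge\tfrac1\alpha\phi(\alpha)$, hence $\phi(\alpha)\le 0$, i.e. $g(\alpha)\le h(\alpha)=\tfrac12\alpha(\alpha-1)\epsilon^2$. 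Dividing by $\alpha-1>0$ gives $D_\alpha(P\|Q)\le\tfrac12\alpha\epsilon^2$, as claimed.

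The only genuinely delicate step is the variance bound $g''\le\epsilon^2$: one must correctly recognize $g''$ as the variance of $f$ under the tilted measure $Q_\lambda$ — which relies on the boundedness of $f$ to legitimize interchanging derivative and integral — and then apply the interval-length variance bound to $Q_\lambda$ rather than to $Q$ itself. The remaining ingredients (the two boundary evaluations of $g$ and the elementary concavity argument) are routine, and no results beyond those already recalled in the excerpt are needed.
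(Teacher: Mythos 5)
Your proof is correct. Note that the paper does not prove this lemma at all---it is imported verbatim as Proposition~3.3 of \cite{bun2016concentrated}---so there is no in-paper argument to compare against; what you have written is a valid self-contained derivation. The reduction $(\alpha-1)D_\alpha(P\|Q)=\log\mathbb{E}_Q[e^{\alpha f}]=:g(\alpha)$ with $f=\log(dP/dQ)$ is right; mutual absolute continuity (forced by both $D_\infty$ bounds being finite) correctly transfers the two one-sided bounds so that $-\epsilon\le f\le\epsilon$ holds $Q$-a.s.; the identity $g''(\lambda)=\mathrm{Var}_{Q_\lambda}(f)$ is the standard cumulant-generating-function computation, legitimate here because $f$ is bounded; Popoviciu then gives $g''\le\epsilon^2$; and the three-point concavity argument exploiting the two anchors $g(0)=g(1)=0$ is exactly what upgrades the naive Hoeffding-type bound $g(\alpha)\le\tfrac12\alpha^2\epsilon^2$ to the required $\tfrac12\alpha(\alpha-1)\epsilon^2$. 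For comparison, the original proof in \cite{bun2016concentrated} takes a different elementary route: it bounds the convex map $t\mapsto t^\alpha$ on $[e^{-\epsilon},e^{\epsilon}]$ by its chord, uses $\mathbb{E}_Q[dP/dQ]=1$ to evaluate the resulting expectation as $\bigl(\sinh(\alpha\epsilon)-\sinh((\alpha-1)\epsilon)\bigr)/\sinh(\epsilon)$, and then checks by calculus that this is at most $e^{\alpha(\alpha-1)\epsilon^2/2}$. Your tilted-measure variance argument trades that $\sinh$ inequality for differentiation under the integral sign plus Popoviciu, which is arguably cleaner and makes the sub-Gaussian structure of the privacy loss explicit; both routes are fully rigorous.
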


\begin{lemma}[Proposition $3$ in \cite{mironov2017renyi}]\label{lemma:CompRuleRDP} If $f$ is an $(\alpha, \epsilon)$-Rényi DP mechanism, it also satisfies $(\epsilon + \f{\log(1/\delta)}{\alpha - 1}, \delta)$-DP for any $0 < \delta < 1$.
\end{lemma}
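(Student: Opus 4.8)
The plan is to prove this as a tail bound on the privacy loss, which is the standard route from a Rényi divergence bound to an $(\epsilon,\delta)$-DP guarantee. Fix an arbitrary pair of neighboring datasets $\mathcal{D}\sim\mathcal{D}^\prime$ and let $P$ and $Q$ denote the output distributions of $f(\mathcal{D})$ and $f(\mathcal{D}^\prime)$; by hypothesis $D_\alpha(P\|Q)\leq\epsilon$ for the given $\alpha>1$. Write $\epsilon^\prime \triangleq \epsilon + \f{\log(1/\delta)}{\alpha - 1}$. The goal is to show that $P(S)\leq e^{\epsilon^\prime}Q(S) + \delta$ for every measurable set $S$, which is exactly Definition~\ref{defi:dp} for these two distributions; since $\mathcal{D}\sim\mathcal{D}^\prime$ was arbitrary, this establishes $(\epsilon^\prime,\delta)$-DP.

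First I would split the output space according to the log-likelihood ratio. Define the \emph{bad region} $B \triangleq \{x: \log\f{P(x)}{Q(x)} > \epsilon^\prime\}$, and decompose $P(S) = P(S\cap B) + P(S\setminus B)$. On $S\setminus B$ the likelihood ratio is at most $e^{\epsilon^\prime}$ by definition of $B$, so $P(S\setminus B)\leq e^{\epsilon^\prime}Q(S\setminus B)\leq e^{\epsilon^\prime}Q(S)$. It therefore remains only to control the first term via $P(S\cap B)\leq P(B)$, and to prove $P(B)\leq\delta$.

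The second step converts the Rényi bound into the needed tail bound by a Chernoff/Markov argument. Because $\alpha>1$, the event $x\in B$ is equivalent to $\left(\f{P(x)}{Q(x)}\right)^{\alpha-1} > e^{(\alpha-1)\epsilon^\prime}$, so Markov's inequality under $P$ gives
$$P(B)\leq e^{-(\alpha-1)\epsilon^\prime}\,\expt_{x\sim P}\left[\left(\f{P(x)}{Q(x)}\right)^{\alpha-1}\right].$$
The expectation is precisely the $\alpha$-th moment inside the Rényi divergence, since $\expt_{x\sim P}\left[\left(\f{P(x)}{Q(x)}\right)^{\alpha-1}\right] = \int P(x)^\alpha Q(x)^{1-\alpha}\,dx = e^{(\alpha-1)D_\alpha(P\|Q)}\leq e^{(\alpha-1)\epsilon}$. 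Substituting and plugging in $\epsilon^\prime$ yields $P(B)\leq e^{(\alpha-1)(\epsilon-\epsilon^\prime)} = e^{-\log(1/\delta)} = \delta$. Combining with the first step gives $P(S)\leq P(B) + e^{\epsilon^\prime}Q(S)\leq \delta + e^{\epsilon^\prime}Q(S)$, as required.

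All the steps are elementary once the threshold $\epsilon^\prime$ is chosen so that the Markov estimate lands exactly on $\delta$, so there is no genuine obstacle; the only point demanding care is the bookkeeping between the natural-log normalization in the definition of $D_\alpha$ and the exponent $\alpha-1$, ensuring that the moment identity $\int P^\alpha Q^{1-\alpha}\,dx = e^{(\alpha-1)D_\alpha(P\|Q)}$ is invoked in the correct direction and that the factor $\alpha-1$ is positive so the Markov inequality preserves its sense. A minor technical remark is that the likelihood ratio $P/Q$ is well defined $Q$-almost everywhere because finiteness of $D_\alpha(P\|Q)$ forces $P\ll Q$; the $Q$-null set where $Q(x)=0$ contributes nothing to either side and can be discarded.
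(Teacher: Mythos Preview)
Your proof is correct and is exactly the standard argument for converting an $(\alpha,\epsilon)$-R\'enyi DP guarantee into an $(\epsilon',\delta)$-DP one via a Markov/Chernoff tail bound on the privacy loss. The paper itself does not supply a proof of this lemma; it simply quotes the result as Proposition~3 of \cite{mironov2017renyi}, and your argument is in fact the same one Mironov gives there.
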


\subsection{Proof of Theorem \ref{theo:RDP}}\label{app:proofRDP}

Define neighboring datasets \zixiangr{$\mathcal{D} = \bigcup_{i = 1}^M\mathcal{D}_i$ and $\mathcal{D^\prime} = \{\bigcup_{i = 1, i\neq k}^M\mathcal{D}_i\}\cup \mathcal{D}_k^\prime $, where $\mathcal{D}_k^\prime \triangleq \mathcal{D}_k\cup\{z\}$ has one more element than $\mathcal{D}_k$.} As we have assumed that all datasets \zixiangr{$\{\mathcal{D}_i, i = 1,\cdots, M\}$} have the same size $D$,  the size of $\mathcal{D}_k^\prime$ is $D^\prime \triangleq D + 1$. 
We have $c = \left(\begin{matrix}{D}\\{d_{\text{batch}}}\end{matrix}\right)$ and $c^\prime = \left(\begin{matrix}{D^\prime}\\{d_{\text{batch}}}\end{matrix}\right)$ different mini-batches for dataset \zixiangr{$\{\mathcal{D}_i, i = 1,\cdots, M\}$} and $\mathcal{D}_k^\prime$, respectively. We further denote set $\Xi_{i} \triangleq \{\xi_{i,1}, \cdots,\xi_{i,c}\}$ as a collection of all the mini-batches corresponding to \zixiangr{$\{\mathcal{D}_i, i = 1,\cdots, M\}$}. \zixiangr{We next calculate the number of mini-batches in dataset $\mathcal{D}_k^\prime$.} It is straightforward to verify that
$
    c^\prime = \left(\begin{matrix}{D + 1}\\{d_{\text{batch}}}\end{matrix}\right) = \left(\begin{matrix}{D}\\{d_{\text{batch}}}\end{matrix}\right) + \left(\begin{matrix}{D}\\{d_{\text{batch}} - 1}\end{matrix}\right)$, which reveals that there are $\left(\begin{matrix}{D}\\{d_{\text{batch}}}\end{matrix}\right)$ mini-batches in $\mathcal{D}_k^\prime$ that are the same as those in $\Xi_k$, and the remaining $\left(\begin{matrix}{D}\\{d_{\text{batch}} - 1}\end{matrix}\right)$ mini-batches are the ones that contain the data sample $z$. We denote $\Xi_k^\prime = \{\xi^\prime_{k,1}, \cdots, \xi_{k, c^{\prime} - c}^\prime \}$ as a collection of all the mini-batches that contain the data sample $z$. Therefore, all the possible mini-batches of $\mathcal{D}_k^\prime$ are in $\Xi_k\cup\Xi_k^\prime$.

We note that, \zixiangr{in each learning round, $K$ of total $M$ clients are randomly selected. Therefore, there are total $\chi\triangleq\left(\begin{matrix}M\\K\end{matrix}\right)c^K$ combinations of mini-batches in $\mathcal{D}$. We denote $\mathcal{S}_t =\{s_{t,1},\cdots,s_{t,i},\cdots,s_{t,K}\}$ as the index set for the $K$ selected clients at $t$-th learning round, and it is easy to verify that we have $\left(\begin{matrix}M\\K\end{matrix}\right)$ combinations for client selection per learning round. Therefore, we further denote each possible mini-batch combination as
$\xibf_j \triangleq\{\xi_{s_{t,1},c_1},\cdots,\xi_{s_{t,i},c_i},\cdots, \xi_{s_{t,K},c_K}\}, j = 1,\cdots, \chi,$ where $\xi_{s_{t,i},c_i}\in \Xi_{s_{t,i}}, \forall s_{t,i}\in\mathcal{S}_t$. As for $\mathcal{D}^\prime$, there are total $\left(\begin{matrix}M\\K\end{matrix}\right)c^K + \left(\begin{matrix}M-1\\K-1\end{matrix}\right)(c^\prime - c)c^{K-1}$ mini-batch combinations. Besides the same $\chi$ combinations in $\mathcal{D}$, there are additional $\chi^\prime\triangleq\left(\begin{matrix}M-1\\K-1\end{matrix}\right)(c^\prime - c)c^{K-1}$ mini-batch combinations in $\mathcal{D}^\prime$, which contains the mini-batch of the $k$-th client and is chosen from $\Xi_k^\prime$. We denote them similarly as 
$\xibf^\prime_{j} \triangleq\{\xi_{s_{t,1},c_1},\cdots,\xi_{s_{t,i},c_i},\cdots, \xi_{s_{t,K-1},c_{K-1}},\xi^\prime_{k,c^\prime_k}\}, j = 1,\cdots \chi^\prime$, where $\xi_{s_{t,i},c_i}\in \Xi_{s_{t,i}}, \forall s_{t,i}\in \mathcal{S}_t\backslash[k]$, and $\xi^\prime_{k,c_k^\prime}\in \Xi^\prime_k$.}
In addition, we use $g(\xibf_j)$ to denote the noise-free global models calculated from mini-batch combination $\xibf_j$. 

\zixiangr{Building on these, the mechanism $\mathcal{M}(\mathcal{D})$ can be viewed as a random vector sampling from the following distribution: 
     $$\mathcal{M}(\mathcal{D}) =  \sum_{j = 1}^{\chi} 
    \frac{1}{\chi}\ssf{Cauchy}(g(\xibf_j), \zixiangbbb{\gamma} \Ibf_d).$$
Similarly, 
\begin{align}
\label{eqn:appx2}
    &\mathcal{M}(\mathcal{D}^\prime) = \sum_{j = 1}^{\chi}\frac{1}{\chi+\chi^\prime}\ssf{Cauchy}(g(\xibf_j), \zixiangbbb{\gamma} \Ibf_d) + \sum_{j=1}^{\chi^\prime} \frac{1}{\chi+\chi^\prime}\ssf{Cauchy}(g(\xibf^\prime_j), \zixiangbbb{\gamma} \Ibf_d).
\end{align}
We notice that $\frac{\chi^\prime}{\chi} = \frac{K}{M}\frac{c^\prime - c}{c}$ and $\frac{c^\prime}{c} = 1 + \frac{d_{\text{batch}}}{D+1-d_{\text{batch}}}$. Defining $q\triangleq\frac{d_{\text{batch}}}{D+1-d_{\text{batch}}}$ and $p\triangleq \frac{K}{M}$, $\mathcal{M}(\mathcal{D}^\prime)$ can be re-written as
\begin{align*}
     \mathcal{M}(\mathcal{D}^\prime) 
     = &\frac{1}{\chi+\chi^\prime }\sum_{j = 1}^{\chi} \Big[\ssf{Cauchy}(g(\xibf_j), \zixiangbbb{\gamma} \Ibf_d) + \f{\chi^\prime}{\chi}\ssf{Cauchy}(g(\xibf^\prime_{\lceil j*\f{\chi^\prime}{\chi}\rceil}), \zixiangbbb{\gamma} \Ibf_d)\Big]\\
     = & \frac{1}{\chi}\sum_{j = 1}^{\chi} \Big[\f{\chi}{\chi + \chi^\prime}\ssf{Cauchy}(g(\xibf_j), \zixiangbbb{\gamma} \Ibf_d) + \f{\chi}{\chi + \chi^\prime}\f{\chi^\prime}{\chi}\ssf{Cauchy}(g(\xibf^\prime_{\lceil j*\f{\chi^\prime}{\chi}\rceil}, \zixiangbbb{\gamma} \Ibf_d)\Big]\\
     = & \sum_{j = 1}^{\chi} \frac{1}{\chi}\left[\left(1 - \frac{pq}{1+pq}\right)\ssf{Cauchy}(g(\xibf_j), \zixiangbbb{\gamma} \Ibf_d)+ \frac{pq}{1+pq}\ssf{Cauchy}(\xibf^\prime_{\lceil j*pq\rceil}, \zixiangbbb{\gamma} \Ibf_d)\right],
\end{align*}
where the first equality comes from replicating each of the $\chi^\prime$ items in the second summation in Eqn.~\eqref{eqn:appx2} $\frac{\chi^\prime}{\chi}$ times, thus allowing the summation to be over the same range as the first summation.}

We next bound the Rényi divergence for $\alpha = + \infty$. \zixiangr{Since Rényi divergence is quasi-convex, we have
\begin{align*}\small
    & D_{\infty}(\mathcal{M}(\mathcal{D})||\mathcal{M}(\mathcal{D}^\prime))\\
    \leq & \sup_{j}D_{\infty}\left[\ssf{Cauchy}(g(\xibf_j), \gamma \Ibf_d)||(1 - \frac{pq}{1+pq})\ssf{Cauchy}(g(\xibf_j), \gamma \Ibf_d)+ \frac{pq}{1+pq}\ssf{Cauchy}(g(\xibf^\prime_{\lceil j*pq\rceil}), \gamma\Ibf_d)\right]\\
     \leq &\sup_{j}D_{\infty}\left[\ssf{Cauchy}(0, \gamma \Ibf_d)||(1 - \frac{pq}{1+pq})\ssf{Cauchy}(0, \gamma \Ibf_d)+ \frac{pq}{1+pq}\ssf{Cauchy}(g(\xibf^\prime_{\lceil j*pq\rceil}) - g(\xibf_j), \gamma\Ibf_d)\right]. 
\end{align*}
As shown in \eqref{eq:GSbound}, $\norm{g(\xibf^\prime_{\lceil j*pq\rceil}) - g(\xibf_j)}_2\leq 2C$. Therefore, via a rotation, we have that $g(\xibf^\prime_{\lceil j*pq\rceil}) - g(\xibf_j) = c_\xi \ebf_1$ where $c_\xi \leq 2C$. By the additivity of Rényi divergence for product distributions \cite{van2014renyi}, we have that
\begin{align*}\small
     D_{\infty}(\mathcal{M}(\mathcal{D})||\mathcal{M}(\mathcal{D}^\prime))
     \leq  D_{\infty}\left[\ssf{Cauchy}(0, \gamma )||(1 - \frac{pq}{1+pq})\ssf{Cauchy}(0, \gamma ) + \frac{pq}{1+pq}\ssf{Cauchy}(2C, \gamma)\right].
\end{align*}

We next bound $D_{\infty}(\mathcal{M}(\mathcal{D})||\mathcal{M}(\mathcal{D}^\prime))$ and $D_{\infty}(\mathcal{M}(\mathcal{D}^\prime)||\mathcal{M}(\mathcal{D}))$. Based on the results in Lemma \ref{lemma:CauchyMinMax}, we have  
\begin{equation*}
    \max\squab{\frac{\ssf{Cauchy}(2C, \gamma)}{\ssf{Cauchy}(0, \gamma)} } = \zixiangbbb{\f{\sqrt{C^2 + \gamma^2} + C}{\sqrt{C^2 + \gamma^2} - C}},
\end{equation*}
and
\begin{equation*}
    \min\squab{\frac{\ssf{Cauchy}(2C, \gamma)}{\ssf{Cauchy}(0, \gamma)} } =  \zixiangbbb{\f{\sqrt{C^2 + \gamma^2} - C}{\sqrt{C^2 + \gamma^2} + C}}.
\end{equation*}
Therefore, $D_{\infty}(\mathcal{M}(\mathcal{D})||\mathcal{M}(\mathcal{D}^\prime))$ and $D_{\infty}(\mathcal{M}(\mathcal{D}^\prime)||\mathcal{M}(\mathcal{D}))$ can be bounded respectively as follows:
\begin{align*}\small
    &  D_{\infty}\left[(1 - \frac{pq}{1+pq})\ssf{Cauchy}(0, \gamma ) + \frac{pq}{1+pq}\ssf{Cauchy}(2C, \gamma)||\ssf{Cauchy}(0, \gamma )\right]\\
     = & \sup\log \left( 1 - \frac{pq}{1+pq} + \frac{pq}{1+pq}  \frac{\ssf{Cauchy}(2C, \gamma)}{\ssf{Cauchy}(0, \gamma)} \right)   \\
      \leq  &  \log\left(1 + \zixiangbbb{\frac{pq}{1+pq}\frac{ 2C\sqrt{C^2 + \gamma^2} + 2C^2 }{\gamma^2}}\right)
\end{align*}
and
\begin{align*}\small
    &  D_{\infty}\left[\ssf{Cauchy}(0, \gamma )||(1 - \frac{pq}{1+pq})\ssf{Cauchy}(0, \gamma ) + \frac{pq}{1+pq}\ssf{Cauchy}(2C, \gamma)\right]\\
     = &\sup\log \left( \f{1}{1 - \frac{pq}{1+pq} + \frac{pq}{1+pq}  \frac{\ssf{Cauchy}(2C, \gamma)}{\ssf{Cauchy}(0, \gamma)} }\right) 
     \leq  \log\left(1 + \zixiangbbb{\frac{pq}{1+pq}\frac{ 2C\sqrt{C^2 + \gamma^2} - 2C^2 }{\gamma^2}}\right).
\end{align*}
\zixiangbbb{It is straightforward to verify that $D_{\infty}(\mathcal{M}(\mathcal{D})||\mathcal{M}(\mathcal{D}^\prime))\geq D_{\infty}(\mathcal{M}(\mathcal{D}^\prime)||\mathcal{M}(\mathcal{D})), \forall p,q, \text{ such that } 0<\frac{pq}{1+pq}<1$. Therefore, both $D_{\infty}(\mathcal{M}(\mathcal{D})||\mathcal{M}(\mathcal{D}^\prime))$ and $D_{\infty}(\mathcal{M}(\mathcal{D}^\prime)||\mathcal{M}(\mathcal{D}))$  can be bounded by $\log\big(1 + \zixiangbbb{\frac{pq}{1+pq}\frac{ 2C\sqrt{C^2 + \gamma^2} + 2C^2 }{\gamma^2}}\big)$. 
Based on Lemma \ref{lemma:RenyiDPbound}, we can guarantee $D_{\alpha}(\mathcal{M}(\mathcal{D})||\mathcal{M}(\mathcal{D}^\prime))\leq \f{1}{2}\alpha\log^2\big(1 + \zixiangbbb{\frac{pq}{1+pq}\frac{ 2C\sqrt{C^2 + \gamma^2} + 2C^2 }{\gamma^2}}\big)$, which completes the proof.}}

\subsection{Proof of Theorem \ref{theo:compositeRDP}}\label{app:ProofCompositeRDP}
Based on Theorem \ref{theo:RDP} and the composition rule of Rényi DP as shown in Lemma \ref{lemma:CompRuleRDP}, the Rényi DP guarantee for total $T$ learning rounds is \zixiangr{$\squab{\f{1}{2}T\alpha\log^2\left(1 + \zixiangbbb{\frac{pq}{1+pq}\frac{ 2C\sqrt{C^2 + \gamma^2} + 2C^2 }{\gamma^2}}\right), \alpha}$-Rényi DP}. Therefore, the proposed design satisfies $(\epsilon^\prime,\delta)$-DP, where
\zixiangr{
\begin{align*}
 \epsilon^\prime & = \min_{\alpha > 1} \f{1}{2}T\alpha\log^2\left(1 + \zixiangbbb{\frac{pq}{1+pq}\frac{ 2C\sqrt{C^2 + \gamma^2} + 2C^2 }{\gamma^2}}\right) + \f{\log(1/\delta)}{\alpha - 1} \\
 & \geq \sqrt{2T\log(\frac{1}{\delta})}\log\left(1 + \zixiangbbb{\frac{pq}{1+pq}\frac{ 2C\sqrt{C^2 + \gamma^2} + 2C^2 }{\gamma^2}}\right) + \f{1}{2}T\log^2\left(1 + \zixiangbbb{\frac{pq}{1+pq}\frac{ 2C\sqrt{C^2 + \gamma^2} + 2C^2 }{\gamma^2}}\right).
\end{align*}}

\subsection{Proof of Theorem \ref{theo:UL-RDP}}\label{app:proofULRDP}
Define neighboring datasets \zixiangr{$\mathcal{D} = \bigcup_{i = 1}^M\mathcal{D}_i$} and \zixiangr{$\mathcal{D^\prime} = \bigcup_{i = 1}^M\mathcal{D}_i\cup \mathcal{D}_k^\prime $}, \zixiangr{where $\mathcal{D}_k^\prime$ is an arbitrary local dataset of a client $k$}. \zixiangr{For dataset $\mathcal{D}$, $K$ of the total $M$ clients are randomly scheduled for the task in each learning round. We denote $\mathcal{S}_j, \forall j=1,\cdots,\psi$, where $\psi \triangleq \Big(\begin{matrix}M\\K\end{matrix}\Big)$, as the set of all the possible client combinations during a single learning round for dataset $\mathcal{D}$. For dataset $\mathcal{D}^\prime$, beside the previous $\psi$ combinations, there are additional $\psi^\prime \triangleq \Big(\begin{matrix}M\\K-1\end{matrix}\Big)$ combinations that contain client $k$. We denote them as $\mathcal{S}_j^\prime, \forall j = 1,\cdots,\psi^\prime$.} Therefore, the mechanism $\mathcal{M}(\mathcal{D})$ samples from the following distribution
\zixiangr{
\begin{equation*}
    \mathcal{M}(\mathcal{D}) = \frac{1}{\psi}\sum_{j = 1}^\psi \ssf{Cauchy}(g(\mathcal{S}_j), \zixiangbbb{\gamma} \Ibf_d).
\end{equation*}
Similarly, 
\begin{align*}
    \mathcal{M}(\mathcal{D}^\prime)  & = \frac{1}{\psi + \psi^\prime}\sum_{j = 1}^{\psi} \ssf{Cauchy}(g(\mathcal{S}_j), \zixiangbbb{\gamma} \Ibf_d) + \frac{1}{\psi+\psi^\prime} \sum_{j = 1}^{\psi^\prime} \ssf{Cauchy}(g(\mathcal{S}^\prime_j), \zixiangbbb{\gamma} \Ibf_d)\\
    & = \frac{1}{\psi}\Big[\frac{\psi}{\psi + \psi^\prime}\sum_{j = 1}^{\psi} \ssf{Cauchy}(g(\mathcal{S}_j), \zixiangbbb{\gamma} \Ibf_d) + \frac{\psi}{\psi+\psi^\prime} \frac{\psi^\prime}{\psi} \ssf{Cauchy}(g(\mathcal{S}^\prime_j), \zixiangbbb{\gamma} \Ibf_d)\Big]\\
    & = \frac{1}{\psi}\Big[\left(1-\frac{K}{M + 1}\right)\sum_{j = 1}^{\psi} \ssf{Cauchy}(g(\mathcal{S}_j), \zixiangbbb{\gamma} \Ibf_d) + \frac{K}{M + 1} \ssf{Cauchy}(g(\mathcal{S}^\prime_j), \zixiangbbb{\gamma} \Ibf_d)\Big].
\end{align*}
By the definition of $p\triangleq \frac{K}{M+1}$ and following the similar techniques in Appendix \ref{app:proofRDP}, $D_{\infty}(\mathcal{M}(\mathcal{D})||\mathcal{M}(\mathcal{D}^\prime))$ and $D_{\infty}(\mathcal{M}(\mathcal{D}^\prime)||\mathcal{M}(\mathcal{D}))$ can both be bounded by
$\log\left(1 + \zixiangbbb{p\frac{ 2C\sqrt{C^2 + \gamma^2} + 2C^2 }{\gamma^2}}\right)$. Again, based on Lemma \ref{lemma:RenyiDPbound}, we can guarantee $D_{\alpha}(\mathcal{M}(\mathcal{D})||\mathcal{M}(\mathcal{D}^\prime))\leq \f{1}{2}\alpha\log^2\left(\zixiangbbb{1 + p\frac{ 2C\sqrt{C^2 + \gamma^2} + 2C^2 }{\gamma^2}}\right)$, which completes the single round DP guarantee. The proof of the client-level DP guarantee for the composition of $T$ rounds follows the same as that in Appendix \ref{app:ProofCompositeRDP}.}

\section{Proof of Theorem \ref{thm.Conv}}\label{App:thm.Conv}
With a slight abuse of notation, we change the timeline to be with respect to the overall SGD iteration time steps instead of the communication rounds, i.e., $$t=\underbrace{1, \cdots, E}_{\text{round 1}}, \underbrace{E+1, \cdots, 2E}_{\text{round 2}}, \cdots, \cdots, \underbrace{(T-1)E+1, \cdots, TE}_{\text{round $T$}}.$$ 
Note that the global model $\vect{w}_{t}$ is only accessible at the clients for specific $t \in \mathcal{I}_E$, where $\mathcal{I}_E=\{nE ~|~n=1,2,\dots\}$, i.e., the time steps for communication.  The notations for $\eta_t$ are similarly adjusted to this extended timeline, but their values remain constant within the same round. The key technique in the proof is the {\em perturbed iterate framework}  in \cite{mania2017siam}. In particular, we first define the following variables for client $k\in [M]$: 
\begin{align*}
    \vect{v}_{t+1}^k & \triangleq \vect{w}_t^k - \eta_t \nabla \tilde f_k(\vect{w}_t^k); \\
    \vect{u}_{t+1}^k & \triangleq \begin{cases}
        \vect{v}_{t+1}^k & \text{if~} t+1 \notin \mathcal{I}_E, \\
        \frac{1}{K} \sum_{k = 1}^K \vect{v}_{t+1}^i   & \text{if~} t+1 \in \mathcal{I}_E; 
    \end{cases} \\
    \vect{w}_{t+1}^k & \triangleq \begin{cases}
         \vect{v}_{t+1}^k & \text{if~} t+1 \notin \mathcal{I}_E, \\
          \vect{u}_{t+1}^k + \frac{1}{K}\nbf_{t + 1}
         & \text{if~} t+1 \in \mathcal{I}_E;
    \end{cases}
 \end{align*}
 \zixiangbbb{where $\nbf_{t + 1} \triangleq\frac{\zixiangr{C_{\max,t}}}{C}\squab{\sum_{k = K + 1}^N \frac{\abf_k^T\nbf_1}{\abf^T_k \nbf_s}, \cdots, \sum_{k = K + 1}^N \frac{\abf_k^T\nbf_d}{\abf^T_k \nbf_s}}^T\in \mathbb{C}^{d\times 1}$ is the effective noise vector after de-normalization. Note that $\nbf_{t+1}$ is a truncated Cauchy distribution vector with the following PDF:
 \begin{equation*}
 f(n_{t+1,i}) =  \frac{\gamma}{\left (n_{t+1,i}^2 + {\gamma}^2 \right )\left( \arctan\squab{\frac{B -\sum_{k = 1}^K x_k^i}{\zixiangbbb{\gamma}}} + \arctan\squab{\frac{B + \sum_{k = 1}^K x_k^i}{\gamma}}\right)},
 \end{equation*}
where $n_{t+1,i}\in\left[-B -\sum_{k = 1}^K x_k^i,B -\sum_{k = 1}^K x_k^i\right]$, $\forall i = 1,\cdots,d$.
We further have 
 \begin{equation*}\small
\begin{split}
    \expt \norm{\frac{\zixiangr{C_{\max,t}}}{C}n_{t+1,i}}^2 
    & = \frac{\zixiangr{C^2_{\max,t}}}{C^2}\squab{\frac{2\zixiangbbb{\gamma}B}{\arctan\squab{\frac{B -\sum_{k = 1}^K x_k^i}{\zixiangbbb{\gamma}}} + \arctan\squab{\frac{B + \sum_{k = 1}^K x_k^i}{\zixiangbbb{\gamma}}}} - \zixiangbbb{\gamma}^2}\\ 
    & \leq \frac{\zixiangbbb{\gamma}^2\zixiangr{C_{\max,t}}^2}{C^2\arctan\left(\frac{B+C}{\zixiangbbb{\gamma}}\right)}\squab{\frac{B}{\zixiangbbb{\gamma}} - \arctan\left(\frac{B+C}{\zixiangbbb{\gamma}}\right)}\\
    & \triangleq \zixiangr{\zixiangr{C^2_{\max,t}}D(\gamma)}.
\end{split}
\end{equation*}
 }
Then, we construct the following \textit{virtual sequences}:
$$\avgvect{v}_t=\frac{1}{M}\sum_{k=1}^M \vect{v}_t^k,\;\;\avgvect{u}_t=\frac{1}{M}\sum_{k=1}^M \vect{u}_t^k,\;\;\text{and~~}\avgvect{w}_t=\frac{1}{M}\sum_{k=1}^M \vect{w}_t^k.$$ 
We also define $\avgvect{g}_{t} = \frac{1}{M}\sum_{k=1}^M \nabla f_k(\vect{w}_t^k)$ and $\vect{g}_{t} = \frac{1}{M}\sum_{k=1}^M \nabla \tilde f_k(\vect{w}_t^k)$ for convenience. Therefore, $\avgvect{v}_{t+1} = \avgvect{w}_t - \eta_t \vect{g}_t$ and $\expt \squab{\vect{g}_t} = \avgvect{g}_t$. Note that the global model $\wbf_{t + 1}$ is only meaningful when $t+1 \in \mathcal{I}_E$. Hence, we have $\vect{w}_{t + 1}  =\frac{1}{K}\sum_{k=1}^K \vect{w}_{t + 1}^k = \vect{w}_{t + 1}^k = \frac{1}{M}\sum_{k=1}^K \vect{w}_{t + 1}^k = \avgvect{w}_{t + 1}$.
Thus it is sufficient to analyze the convergence of $\norm{\avgvect{w}_{t+1}-\vect{w}^*}^2$ to evaluate \algg.
\subsection{Lemmas}
\label{sec:lemmas}
\begin{lemma}
\label{lemma:one-step-sgd}
Let Assumptions 1-4 hold, $\eta_t$ is non-increasing, and $\eta_t \leq 2 \eta_{t+E}$ for all $t \geq 0$. If $\eta_t \leq {1}/(4L)$, we have
\begin{equation*}
    \expt \norm{\avgvect{v}_{t+1} - \vect{w}^*}^2  \leq (1-\eta_t \mu) \expt \norm{\avgvect{w}_t - \vect{w}^*}^2 
     + \eta_t^2 \left({\sum_{k = 1}^M H_k^2}/{M^2} + 6 L \Gamma + 8(E-1)^2H^2\right). 
\end{equation*}
\end{lemma}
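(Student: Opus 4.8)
The plan is to adapt the perturbed-iterate / virtual-sequence analysis for \textsc{FedAvg} (cf.\ \cite{li2019convergence,mania2017siam}), using the averaging weights $p_k=1/M$ that define the virtual sequences. Since $\avgvect{v}_{t+1}=\avgvect{w}_t-\eta_t\vect{g}_t$ with $\vect{g}_t=\frac{1}{M}\sum_{k=1}^M\nabla\tilde f_k(\vect{w}_t^k)$, I would start from
\[
\norm{\avgvect{v}_{t+1}-\vect{w}^*}^2=\norm{\avgvect{w}_t-\vect{w}^*}^2\;\underbrace{-\,2\eta_t\dotp{\avgvect{w}_t-\vect{w}^*}{\vect{g}_t}}_{A_1}\;+\;\underbrace{\eta_t^2\norm{\vect{g}_t}^2}_{A_2},
\]
and take expectations. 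For $A_2$, Assumption~\ref{as:3} gives $\expt\vect{g}_t=\avgvect{g}_t:=\frac{1}{M}\sum_k\nabla f_k(\vect{w}_t^k)$; then the bias--variance split together with independence of the stochastic gradients across clients yields $\expt\norm{\vect{g}_t}^2=\norm{\avgvect{g}_t}^2+\sum_k\frac{1}{M^2}\variance(\nabla\tilde f_k(\vect{w}_t^k))\le\norm{\avgvect{g}_t}^2+\frac{1}{M^2}\sum_{k=1}^M H_k^2$, using Assumption~\ref{as:4} for the variance; the leftover $\norm{\avgvect{g}_t}^2$ is kept for now and absorbed later via smoothness.

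For $\expt A_1=-2\eta_t\sum_k\frac1M\dotp{\avgvect{w}_t-\vect{w}^*}{\nabla f_k(\vect{w}_t^k)}$ I would split $\avgvect{w}_t-\vect{w}^*=(\avgvect{w}_t-\vect{w}_t^k)+(\vect{w}_t^k-\vect{w}^*)$. The first inner product is controlled by AM--GM, $-2\eta_t\dotp{\avgvect{w}_t-\vect{w}_t^k}{\nabla f_k(\vect{w}_t^k)}\le\norm{\avgvect{w}_t-\vect{w}_t^k}^2+\eta_t^2\norm{\nabla f_k(\vect{w}_t^k)}^2$, creating a first copy of the client-drift term. The second is controlled by $\mu$-strong convexity (Assumption~\ref{as:2}): $-\dotp{\vect{w}_t^k-\vect{w}^*}{\nabla f_k(\vect{w}_t^k)}\le-(f_k(\vect{w}_t^k)-f_k(\vect{w}^*))-\frac{\mu}{2}\norm{\vect{w}_t^k-\vect{w}^*}^2$, and since $\sum_k\frac1M(f_k(\vect{w}^*)-f_k^*)=\Gamma$, this becomes $-2\eta_t\sum_k\frac1M(f_k(\vect{w}_t^k)-f_k^*)+2\eta_t\Gamma-\eta_t\mu\sum_k\frac1M\norm{\vect{w}_t^k-\vect{w}^*}^2$. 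Finally Jensen's inequality, $\sum_k\frac1M\norm{\vect{w}_t^k-\vect{w}^*}^2\ge\norm{\avgvect{w}_t-\vect{w}^*}^2$, turns the leading term into the contraction $(1-\eta_t\mu)\norm{\avgvect{w}_t-\vect{w}^*}^2$.

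The step I expect to be the main obstacle is the non-IID bookkeeping of the objective-gap terms. Collecting the $\norm{\nabla f_k(\vect{w}_t^k)}^2$ terms from $A_1$ and from $\norm{\avgvect{g}_t}^2\le\sum_k\frac1M\norm{\nabla f_k(\vect{w}_t^k)}^2$, bounding $\norm{\nabla f_k(\vect{w}_t^k)}^2\le 2L(f_k(\vect{w}_t^k)-f_k^*)$ by $L$-smoothness (with $f_k^*$ the minimum), and combining with the $-2\eta_t\sum_k\frac1M(f_k(\vect{w}_t^k)-f_k^*)$ above, one is left with $-\gamma_t\sum_k\frac1M(f_k(\vect{w}_t^k)-f_k^*)$ where $\gamma_t:=2\eta_t(1-2L\eta_t)$; the hypothesis $\eta_t\le 1/(4L)$ makes $\gamma_t\in[\eta_t,2\eta_t]$ and $\gamma_t-2L\eta_t^2>0$. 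I would then use convexity, $f_k(\vect{w}_t^k)-f_k^*\ge(f_k(\avgvect{w}_t)-f_k^*)+\dotp{\nabla f_k(\avgvect{w}_t)}{\vect{w}_t^k-\avgvect{w}_t}$, bound the inner product by AM--GM together with $\norm{\nabla f_k(\avgvect{w}_t)}^2\le 2L(f_k(\avgvect{w}_t)-f_k^*)$ — this extracts a second copy of the client-drift term — and then use $\sum_k\frac1M(f_k(\avgvect{w}_t)-f_k^*)=(f(\avgvect{w}_t)-f^*)+\Gamma\ge\Gamma\ge 0$. The arithmetic then collapses $2\eta_t\Gamma-(\gamma_t-2L\eta_t^2)\Gamma$ to exactly $6L\eta_t^2\Gamma$, while leaving two copies of $\sum_k\frac1M\norm{\avgvect{w}_t-\vect{w}_t^k}^2$.

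To close, I would invoke the bounded-drift estimate $\expt\sum_k\frac1M\norm{\avgvect{w}_t-\vect{w}_t^k}^2\le 4\eta_t^2(E-1)^2H^2$, proved separately: inside a communication round each $\vect{w}_t^k$ has made at most $E-1$ SGD steps of size $\le\eta_\tau H$ (Assumption~\ref{as:4}) starting from the common iterate $\avgvect{w}_{t_0}=\vect{w}_{t_0}$; monotonicity of $\eta_t$ together with $\eta_{t_0}\le 2\eta_{t_0+E}$ lets one replace each $\eta_\tau$ within the round by $2\eta_t$, and a Jensen bound over the $E-1$ steps gives the estimate. Plugging this into the two drift copies produces $8(E-1)^2H^2\eta_t^2$, and collecting this with $\frac{1}{M^2}\sum_k H_k^2$ from $A_2$ and $6L\Gamma$ from the previous step (all carrying a factor $\eta_t^2$) yields the stated inequality.
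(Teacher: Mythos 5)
Your proposal is correct and follows essentially the same route the paper intends: the paper gives no explicit proof of this lemma and simply points to the perturbed-iterate analysis of \cite{stich2018local} (and \cite{li2019convergence}), and your reconstruction is precisely that argument, with the right bookkeeping of the two client-drift copies, the $6L\eta_t^2\Gamma$ arithmetic under $\eta_t\le 1/(4L)$, and the $4\eta_t^2(E-1)^2H^2$ drift bound via $\eta_{t_0}\le 2\eta_{t_0+E}$. No gaps worth flagging.
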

Lemma~\ref{lemma:one-step-sgd} establishes a bound for the one-step SGD. This result only concerns the local model update and is not impacted by the noisy communication. The derivation is similar to the technique in \cite{stich2018local} and is omitted. 

\begin{lemma}
\label{lemma:sample}
    Let Assumptions 1-4 hold. With $\eta_t \leq 2\eta_{t+E}$ for all $t \geq 0$ and $\forall t+1 \in \mathcal{I}_E$, we have
     \begin{equation*}\label{eqn:sam_unbiased}
        \expt \squab{\avgvect{u}_{t+1}} = \avgvect{v}_{t+1}, \quad  \expt \norm{\avgvect{v}_{t+1} - \avgvect{u}_{t+1}}^2 \leq \frac{M-K}{M-1} \frac{4}{K} \eta_t^2 E^2 H^2.
     \end{equation*}
\end{lemma}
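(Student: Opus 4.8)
\textbf{Proof proposal for Lemma~\ref{lemma:sample}.}
The plan is to treat the server aggregation step at a communication round as sampling $K$ clients uniformly without replacement from the $M$ clients, and to analyze the resulting estimator of the full average $\avgvect{v}_{t+1}=\frac{1}{M}\sum_{k=1}^M \vect{v}_{t+1}^k$. First I would fix $t+1\in\mathcal{I}_E$ and write $\avgvect{u}_{t+1}=\frac{1}{K}\sum_{k\in\mathcal{S}}\vect{v}_{t+1}^k$, where $\mathcal{S}$ is the random size-$K$ subset of $[M]$ selected that round (recall the paper's convention that the ``first $K$'' indices simply relabel the chosen clients). Unbiasedness is then the standard fact that each client is in $\mathcal{S}$ with probability $K/M$, so $\expt[\avgvect{u}_{t+1}]=\frac{1}{K}\cdot\frac{K}{M}\sum_{k=1}^M \vect{v}_{t+1}^k=\avgvect{v}_{t+1}$; this handles the first claim.

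For the variance bound I would use the classical formula for the variance of the sample mean under sampling without replacement, applied coordinate-wise (or to the vector-valued quantity via its squared norm):
\begin{equation*}
\expt\norm{\avgvect{u}_{t+1}-\avgvect{v}_{t+1}}^2 = \frac{M-K}{M-1}\cdot\frac{1}{K}\cdot\frac{1}{M}\sum_{k=1}^M\norm{\vect{v}_{t+1}^k-\avgvect{v}_{t+1}}^2,
\end{equation*}
where $\frac{M-K}{M-1}$ is the finite-population correction factor. The remaining task is to bound the population ``variance'' term $\frac{1}{M}\sum_{k}\norm{\vect{v}_{t+1}^k-\avgvect{v}_{t+1}}^2$ by $4\eta_t^2 E^2 H^2$. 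Here I would expand $\vect{v}_{t+1}^k=\vect{w}_t^k-\eta_t\nabla\tilde f_k(\vect{w}_t^k)$ and, crucially, recall that at a communication round all clients start from the common global model and perform $E$ local SGD steps, so $\vect{w}_t^k-\vect{w}_{t-E}^{\text{(common)}}=-\sum_{\tau}\eta_\tau\nabla\tilde f_k(\cdot)$ is a sum of at most $E$ gradient terms; since $\expt\norm{\nabla\tilde f_k}^2\le H^2$ (Assumption~\ref{as:4}) and $\eta_\tau\le 2\eta_t$ inside a round (the hypothesis $\eta_t\le 2\eta_{t+E}$ read backwards), each $\vect{v}_{t+1}^k$ deviates from the common starting point by a quantity of norm-squared at most $O(\eta_t^2E^2H^2)$. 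Bounding $\frac{1}{M}\sum_k\norm{\vect{v}_{t+1}^k-\avgvect{v}_{t+1}}^2\le\frac{1}{M}\sum_k\norm{\vect{v}_{t+1}^k-\vect{w}_t^{\text{common}}}^2$ (the mean minimizes the sum of squared deviations) then gives the factor $4\eta_t^2E^2H^2$ after tracking constants.

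The main obstacle I anticipate is twofold: (i) getting the constant exactly $4$ rather than a looser $O(E^2 H^2)$, which requires carefully using Jensen/Cauchy–Schwarz on the sum of $E$ gradient steps together with the step-size monotonicity to convert $(\sum_{\tau=1}^E \eta_\tau)^2 H^2$-type bounds into $(2\eta_t E)^2 H^2$; and (ii) making rigorous the reduction that each $\vect{v}_{t+1}^k$ can be compared to a \emph{common} reference point — this hinges on the fact that the lemma is only invoked for $t+1\in\mathcal{I}_E$, i.e., exactly one local-epoch block after the last synchronization, so all $\vect{w}_t^k$ descend from the same $\wbf$. Once that structural observation is in place, the bound is a routine combination of the without-replacement variance identity with Assumption~\ref{as:4}; the expectation in the statement should be understood as conditional on the models at the start of the round, with the sampling randomness and the SGD mini-batch randomness averaged out.
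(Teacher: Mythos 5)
Your proposal is correct and takes essentially the same route as the paper, which does not spell the argument out but defers it to the standard partial-participation technique of \cite{stich2018local} (cf.\ also \cite{li2019convergence}): unbiasedness of uniform sampling without replacement, the finite-population variance identity with factor $\frac{M-K}{M-1}\frac{1}{K}$, and a bound on the client dispersion one local block after the last synchronization using Assumption \ref{as:4} together with step-size monotonicity and $\eta_{t+1-E}\le 2\eta_t$, which produces exactly the constant $4\eta_t^2E^2H^2$. Your two anticipated obstacles are handled precisely as you describe, so no genuine gap remains.
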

\begin{proof}
    Let $\mathcal{S}_{t+1}$ denote the set of chosen indexes. Note that the number of possible $\mathcal{S}_{t+1}$ is $C_N^K$ and we denote the $l$th possible result as $\mathcal{S}_{t+1}^l = \{i_1^l, \dots, i_K^l\}$, where $l=1,\dots,C_N^K$. Therefore,
    \begin{equation*}
       \sum_{j=1}^{C_N^K} \sum_{k=1}^K \vect{v}_{t+1}^{i_k^l} = \frac{K \cdot C_N^K}{N} \sum_{i=1}^N {\vect{v}_{t+1}^k} 
    = C_{N-1}^{K-1}\sum_{i=1}^N {\vect{v}_{t+1}^k}.
    \end{equation*}
    Since when $t+1 \in \mathcal{I}_E$, $$\vect{u}_{t+1}^k = \frac{1}{K} \sum_{k \in S{t+1}} \vect{v}_{t+1}^k$$ for all $k$, we have
    \begin{equation*}
   \avgvect{u}_{t+1} = \sum_{k=1}^N \vect{u}_{t+1}^k = \frac{1}{K} \sum_{k \in S_{t+1}} \vect{v}_{t+1}^k.
    \end{equation*}
    Then
    \begin{equation*}
    \begin{split}
        \expt_{\mathcal{S}_{t}}\squab{\avgvect{u}_{t+1}} & = \sum_{l=1}^{C_N^K} \mathbb{P}\left (\mathcal{S}_{t+1} \mathcal{S}_{t+1}^l \right )  \frac{1}{K} \sum_{k \in S_{t+1}^l} \vect{v}_{t+1}^k \\
        & = \frac{1}{C_N^K} \frac{1}{K} \sum_{j=1}^{C_N^K} \sum_{k=1}^K \vect{v}_{t+1}^{i_k^l} \\
        & = \frac{C_{N-1}^{K-1}}{C_N^K} \frac{1}{K}\sum_{k=1}^N {\vect{v}_{t+1}^k} \\
        & = \frac{1}{N} \sum_{k=1}^N {\vect{v}_{t+1}^k} \\
        & = \avgvect{v}_{t+1}.
    \end{split}
    \end{equation*}
    As for the variance, we have \cite{li2019convergence}
    \begin{equation}
    \label{eqn:uv_st}
        \begin{split}
           & \expt_{\mathcal{S}_{t}}  \norm{\avgvect{u}_{t+1}-\avgvect{v}_{t+1}}^2  = \expt_{\mathcal{S}_{t}} \norm{\frac{1}{K} \sum_{i \in S_{t+1}} \vect{v}_{t+1}^{i}-\avgvect{v}_{t+1}}  \\
           & =  \frac{1}{K^2} \expt_{\mathcal{S}_{t}} \norm{\sum_{i=1}^{N} \mathbb{I}\left\{i \in S_{t}\right\}\left(\vect{v}_{t+1}^{i}-\avgvect{v}_{t+1}\right)}^2 \\
           & =  \frac{1}{K^{2}}\left[\sum_{i \in[N]} \mathbb{P}\left(i \in S_{t+1}\right) \norm{\vect{v}_{t+1}^{i}-\avgvect{v}_{t+1}}^2 \right. \left.+ \sum_{i \neq j} \mathbb{P}\left(i, j \in S_{t+1}\right) \dotp{\vect{v}_{t+1}^{i}-\avgvect{v}_{t+1}}{\vect{v}_{t+1}^{j}-\avgvect{v}_{t+1}} \right ] \\
            &= \frac{1}{K N} \sum_{i=1}^{N} \norm{\vect{v}_{t+1}^{i}-\avgvect{v}_{t+1}}^2  + \sum_{i \neq j} \frac{K-1}{K N(N-1)} \dotp{\vect{v}_{t+1}^{i}-\avgvect{v}_{t+1}}{\vect{v}_{t+1}^{j}-\avgvect{v}_{t+1}} \\
            & =  \frac{1-\frac{K}{N}}{K(N-1)} \sum_{i=1}^{N}\norm{\vect{v}_{t+1}^{i}-\avgvect{v}_{t+1}}^2
        \end{split}
    \end{equation}
    where we use the following results: $$\mathbb{P}\left(i \in S_{t+1}\right) = \frac{K}{N}$$ and $$\mathbb{P}\left(i,j \in S_{t+1}\right) = \frac{K(K-1)}{N(N-1)}$$ for all $i \neq j$, and $$\sum_{i \in [N]} \norm{\vect{v}_{t+1}^i -\avgvect{v}_{t+1}}^2 + \sum_{i \neq j} \dotp{\vect{v}_{t+1}^{i}-\avgvect{v}_{t+1}}{\vect{v}_{t+1}^{j}-\avgvect{v}_{t+1}} = 0.$$    
    Since $t+1 \in \mathcal{I}_E$, we know that $t_0=t-E+1 \in \mathcal{I}_E$ is the communication time, implying that $\{\vect{u}_{t_0}^k \}_{k=1}^N$ are identical. Then
    \begin{equation*}
        \begin{split}
            & \sum_{i=1}^N \norm{\vect{v}_{t+1}^i - \avgvect{v}_{t+1}}^2  = \sum_{i=1}^N \norm{(\vect{v}_{t+1}^i - \avgvect{u}_{t_0}) - (\avgvect{v}_{t+1} - \avgvect{u}_{t_0})}^2 \\
            & = \sum_{i=1}^N \norm{\vect{v}_{t+1}^i - \avgvect{u}_{t_0}}^2 - 2 \dotp{\sum_{i=1}^N \vect{v}_{t+1}^i - \avgvect{u}_{t_0}}{\avgvect{v}_{t+1}-\avgvect{u}_{t_0}} + \sum_{i=1}^N \norm{\avgvect{v}_{t+1} - \avgvect{u}_{t_0}}^2 \\
            & = \sum_{i=1}^N \norm{\vect{v}_{t+1}^i - \avgvect{u}_{t_0}}^2 - \sum_{i=1}^N \norm{\avgvect{v}_{t+1} - \avgvect{u}_{t_0}}^2 \\
            & \leq \sum_{i=1}^N \norm{\vect{v}_{t+1}^i - \avgvect{u}_{t_0}}^2
        \end{split}
    \end{equation*}
    Taking expectation over the randomness of stochastic gradient on Eqn.~\eqref{eqn:uv_st}, we have
    \begin{equation*}
        \begin{split}
            & \expt  \squab{\frac{1}{K(N-1)}\left(1-\frac{K}{N}\right) \sum_{k=1}^N \norm{\vect{v}_{t+1}^i - \avgvect{v}_{t+1}}^2} \\
            & \leq \frac{N-K}{K(N-1)} \frac{1}{N} \sum_{k=1}^N \expt \norm{\vect{v}_{t+1}^i - \avgvect{u}_{t_0}}^2 \\
            & \leq \frac{N-K}{K(N-1)}\frac{1}{N} \sum_{k=1}^N E \sum_{i=t_0}^t \expt \norm{\eta_i \nabla F_k{(\vect{u}_i^k, \xi_i^k)}}^2 \\
            &\leq \frac{N-K}{K(N-1)} E^2 \eta_{t_0}^2 H^2 \\
            & \leq \frac{N-K}{N-1}\frac{4}{K} E^2 \eta_t^2 H^2
        \end{split}
    \end{equation*}
    where in the last line is because $\eta_t$ is non-increasing and $\eta_{t_0} \leq 2\eta_t$.
\end{proof}
Lemma~\ref{lemma:sample} demonstrates a bound for the uniformly random selection. This result bounded the extra ``noise" brought by \emph{partial participation}. We note that $\expt \norm{\avgvect{v}_{t+1} - \avgvect{u}_{t+1}}^2 = 0$ in the full participation case, where $K = M$. We note that the derivation of the proof is similar to the technique in \cite{stich2018local}. 

\begin{lemma}\label{lemma:ROUL}
Let Assumptions 1-4 hold. With $\eta_t \leq 2\eta_{t+E}$ for all $t \geq 0$ and $\forall t+1 \in \mathcal{I}_E$, we have 
    $$\expt \norm{\avgvect{w}_{t+1} - \avgvect{u}_{t + 1}}^2  \leq \frac{4dD(\gamma)}{K^2} \eta^2_{t} E^2 H^2.$$
\end{lemma}
\begin{proof}
As shown previously, $\nbf_{t + 1}$ is a truncated Cauchy random vector. \zixiangr{We have
    $\expt[\nbf^H_{t + 1}\nbf_{t + 1}] \leq d\zixiangr{C^2_{\max,t}}D(\gamma)$.}
\zixiangr{
Recall the definition that $\zixiangr{C_{\max,t}} \triangleq \max\{\norm{{\xbf}_t^k - \mu_k}_2,\forall k\}$, which achieves its maximum at $k_{\max}$, and we use $\mu_{k_{\max}}$ to denote the element-wise mean of $\xbf_t^{k_{\max}}$. 
\zixiangbbb{Since
     $\avgvect{w}_{t+1}  = \frac{1}{M}\sum_{k=1}^M \vect{w}_{t + 1}^k =  \avgvect{u}_{t + 1} + \frac{1}{K}\nbf_{t+1}$}, we can bound $\expt \norm{\avgvect{w}_{t+1} - \avgvect{u}_{t + 1}}^2$ as

\begin{align*}
\label{eq.A1}
     \expt \norm{\avgvect{w}_{t+1} - \avgvect{u}_{t + 1}}^2 &= \expt\left\Vert \frac{1}{K} \nbf_{t + 1} \right\Vert^2 \\
     & \leq \expt \squab{\frac{d\zixiangr{C^2_{\max,t}}D(\gamma)}{K^2} } \\
     & = \frac{dD(\gamma)}{K^2} \expt\norm{\xbf_t^{k_{\max}} - \mu_{k_{\max}}}^2 \\
    & \leq \frac{dD(\gamma)}{K^2} \expt\norm{\xbf_t^{k_{\max}} }^2 \\
    & \leq \frac{dD(\gamma)}{K^2} E \sum_{i = t + 1 - E}^t\expt\norm{\eta_i \nabla\tilde{f}_{k_{\max}}(\wbf_i^{k_{\max}})}^2 \\
    & \leq \frac{dD(\gamma)}{K^2} \eta^2_{t+1-E} E^2 H^2 \\
    & \leq \frac{4dD(\gamma)}{K^2} \eta^2_{t} E^2 H^2,
\end{align*}}
where in the last inequality we use the fact that $\eta_t$ is non-increasing and $\eta_{t + 1 - E}\leq 2\eta_t$.
\end{proof}

\subsection{Proof of Theorem \ref{thm.Conv}}
\label{sec:proofThrm1}
We next consider the convergence of $\expt\norm{\avgvect{w}_{t+1}-\vect{w}^*}^2$.

1) If $t+1 \notin \mathcal{I}_E$, $\avgvect{v}_{t + 1} = \avgvect{w}_{t + 1}$. Using Lemma \ref{lemma:one-step-sgd}, we have:
\begin{equation*}\small
    \label{eqn:case1}
    \begin{split}
    & \expt  \norm{\avgvect{w}_{t+1} - \vect{w}^*}^2  = \expt \norm{\avgvect{v}_{t+1} - \vect{w}^*}^2 \leq (1-\eta_t \mu) \expt \norm{\avgvect{w}_t - \vect{w}^*}^2   + \eta_t^2 \squab{\sum_{k=1}^{M}\frac{H_k^2}{M^2} + 6L \Gamma + 8(E-1)^2 H^2}.
    \end{split}
\end{equation*}

2) If $t+1 \in \mathcal{I}_E$, to evaluate the convergence of $\expt\norm{\avgvect{w}_{t+1}-\vect{w}^*}^2$, we establish
 \begin{equation*} \label{eqn:depart0}
\begin{split}
         \norm{\avgvect{w}_{t+1} - \vect{w}^*}^2 &  = \norm{\avgvect{w}_{t+1} - \avgvect{u}_{t+1} + \avgvect{u}_{t+1}- \vect{w}^*}^2 \\
        & = \underbrace{\norm{\avgvect{w}_{t+1} - \avgvect{u}_{t+1}}^2}_{A_1} + \underbrace{\norm{\avgvect{u}_{t+1}- \vect{w}^*}^2}_{A_2}  + \underbrace{2\dotp{\avgvect{w}_{t+1} - \avgvect{u}_{t+1}}{\avgvect{u}_{t+1}- \vect{w}^*}}_{A_3}.
\end{split}  
\end{equation*}
The expectation of $A_1$ can be bounded using Lemma \ref{lemma:ROUL}.  We then bound the expectation of $A_3$ by the Cauchy–Schwarz inequality: 
\begin{equation*}
\begin{split}
    \expt  & \squab{2\dotp{\avgvect{w}_{t+1} -  \avgvect{u}_{t+1}}{\avgvect{u}_{t+1}- \vect{w}^*}} \leq \frac{2\sqrt{dD(\gamma)}}{K}\eta_t EH\expt\norm{\avgvect{u}_{t+1}- \vect{w}^*},
\end{split}
\end{equation*}
and it is now related to $A_2$. Finally, we can write $A_2$ as
 \begin{equation*} \label{eqn:depart1}
 \begin{split}
         \norm{\avgvect{u}_{t+1} - \vect{w}^*}^2 & = \norm{\avgvect{u}_{t+1} - \avgvect{v}_{t+1} + \avgvect{v}_{t+1}- \vect{w}^*}^2 \\
        & = \underbrace{\norm{\avgvect{u}_{t+1} - \avgvect{v}_{t+1}}^2}_{B_1} + \underbrace{\norm{\avgvect{v}_{t+1}- \vect{w}^*}^2}_{B_2}  + \underbrace{2\dotp{\avgvect{u}_{t+1} - \avgvect{v}_{t+1}}{\avgvect{v}_{t+1}- \vect{w}^*}}_{B_3}.
    \end{split}
\end{equation*}
Based on Lemma~\ref{lemma:sample}, the expectation of $B_3$ over random user selection is zero, since we have $\expt \squab{\avgvect{u}_{t+1} - \avgvect{v}_{t+1}} = \vect{0}$. The expectation of $B_1$ can be bounded also by Lemma~\ref{lemma:sample}. Therefore, we have
\zixiangr{
   \begin{align*}\label{eqn:case3}\small
         \expt\norm{\avgvect{w}_{t+1}-\vect{w}^*}^2 & \leq \left (1 + \frac{2\sqrt{dD(\gamma)}}{K} \eta_t EH \right )\expt\norm{\avgvect{v}_{t+1}- \vect{w}^*}^2\\
        &  + \left (1 + \frac{2\sqrt{dD(\gamma)}}{K}\eta_t EH \right )
        \frac{M-K}{M-1} \frac{4}{K} \eta_t^2 E^2 H^2 +  \frac{4dD(\gamma)}{K^2} \eta^2_{t} E^2 H^2   \\
        & \leq (1-\eta_t \mu\zixiangbbb{^\prime}) \expt \norm{\avgvect{w}_t - \vect{w}^*}^2 + \eta_t^2 \left[ \left(1 + \frac{2\sqrt{dD(\gamma)}}{K}\eta_t EH \right )\right.\\
        & \times \left(\sum_{k=1}^M \frac{H_k^2}{M^2} + 6L \Gamma  + 8(E-1)^2 H^2 + \frac{M-K}{M-1} \frac{4}{K}  E^2 H^2 \right)\left. + {\frac{4dD(\gamma)}{K^2}  E^2 H^2} \right],  
    \end{align*}
where $\mu^\prime \triangleq \mu - 2\sqrt{dD(\gamma)}EH/K$. Let $\Delta_t \triangleq \expt \norm{\avgvect{w}_{t}- \vect{w}^*}^2$. No matter whether $t+1 \in \mathcal{I}_E$ or $t+1 \notin \mathcal{I}_E$, we always have
    \begin{equation*}
      \Delta_{t+1} \leq (1 -\eta_t\mu{^\prime}) \Delta_{t} + \eta_t^2 G,
    \end{equation*}
    where
    \begin{equation*}\small
    \begin{split}
      G \triangleq & \left (1 + \frac{2\sqrt{dD(\gamma)}}{K}EH\eta_1 \right ) \left(\sum_{k=1}^M \frac{H_k^2}{M^2} + 6L \Gamma + 8(E-1)^2 H^2 + \frac{M-K}{M-1} \frac{4}{K}  E^2 H^2 \right)  + {\frac{4dD(\gamma)}{K^2}  E^2 H^2}.
    \end{split}
    \end{equation*}
 Define $v\triangleq \max\{\frac{4G}{{\mu^\prime}^2}, (1 + r)\Delta_1\}$. By choosing $\eta_t = \frac{2}{\mu^\prime(t+r)} $, we can prove  $\Delta_t\leq\frac{v}{t + r}$ by induction:
 \begin{align*}
 \begin{split}
       \Delta_{t + 1} & \leq \left(1 - \frac{2}{t + r}\right)\Delta_t + \frac{4G}{{\mu^\prime}^2(t + r)^2}  \\
       &= \frac{t + r -2}{(t + r)^2}v + \frac{4G}{{\mu^\prime}^2(t + r)^2} \\
      & \leq \frac{t + r -1}{(t + r)^2}v + \frac{4G}{{\mu^\prime}^2(t + r)^2} - \frac{v}{(t + r)^2} \\
      & \leq \frac{v}{t + r + 1}.
 \end{split}
 \end{align*}
By the $L$-smoothness of $f$ and $v \leq \frac{4G}{{\mu^\prime}^2} + (1 + r)\Delta_1$, we prove Theorem \ref{thm.Conv}.}

\bibliographystyle{IEEEtran}
\bibliography{reference}

\end{document}